\numberwithin{equation}{section}
\newtheorem{theorem}{Theorem}[section]
\theoremstyle{definition}
\newtheorem{definition}[theorem]{Definition}
\newtheorem{remark}[theorem]{Remark}
\newtheorem{corollary}[theorem]{Corollary}
\newtheorem{proposition}[theorem]{Proposition}
\newtheorem{lemma}[theorem]{Lemma}
\newtheorem{example}[theorem]{Example}
\newcommand{\Tr}{\textnormal{Tr}}
\begin{document}

\title{$\mathbb{F}$-valued trace of a finite-dimensional commutative $\mathbb{F}$-algebra}
\author[Anuj Kumar Bhagat]{A. K. Bhagat}
\author[Ritumoni Sarma]{R. Sarma}
\address{
	Anuj Kumar Bhagat:
	\endgraf
	Department of Mathematics
	\endgraf
	Indian Institute of Technology, Delhi, Hauz Khas
	\endgraf
	New Delhi-110016 
	\endgraf
	India
	\endgraf
	{\it E-mail address} {\rm anujkumarbhagat632@gmail.com}
}

\address{
	Ritumoni Sarma:
	\endgraf
	Department of Mathematics
	\endgraf
	Indian Institute of Technology, Delhi, Hauz Khas
	\endgraf
	New Delhi-110016 
	\endgraf
	India
	\endgraf
	{\it E-mail address} {\rm ritumoni407@gmail.com}
}
\begin{abstract}
A non-zero $\mathbb{F}$-valued $\mathbb{F}$-linear map on a finite dimensional $\mathbb{F}$-algebra is called an $\mathbb{F}$-valued trace if its kernel does not contain any non-zero ideals. However, given an $\mathbb{F}$-algebra such a map may not always exist. We find an infinite class of finite-dimensional commutative $\mathbb{F}$-algebras which admit an $\mathbb{F}$-valued trace. In fact, in these cases, we explicitly construct a trace map. The existence of an $\mathbb{F}$-valued trace on a finite dimensional commutative $\mathbb{F}$-algebra induces a non-degenerate bilinear form on the $\mathbb{F}$-algebra which may be helpful both theoretically and computationally. In this article, we suggest a couple of applications of an $\mathbb{F}$-valued trace map of an $\mathbb{F}$-algebra to algebraic coding theory.
\end{abstract}
\maketitle
\section{Introduction}\label{Section 1}
Throughout this manuscript, $\mathbb{F}$ denotes a field and $\mathbb{F}_q$ denotes the finite field of order $q,$ where $q$ is a prime power.
\par
The $\mathbb{F}$-valued trace map of an extension of $\mathbb{F}$ plays an important role in the theory of both finite and infinite fields. For instance, every functional of a finite extension $\mathbb{K}$ over $\mathbb{F}$ can be described as $\alpha\mapsto \Tr_{\mathbb{K}/\mathbb{F}}(\alpha\beta),$ for a unique $\beta\in \mathbb{K}.$ In fact, $\Tr_{\mathbb{K}/\mathbb{F}}$ induces a non-degenerate bilinear form, namely, $(\alpha,\beta)\mapsto \Tr_{\mathbb{K}/\mathbb{F}}(\alpha\beta)$ on $\mathbb{K}$ and consequently, given any basis $\{\alpha_1< \alpha_2< \dots< \alpha_n\}$ for $\mathbb{K}$ over $\mathbb{F},$ there is a unique basis $\{\beta_1< \beta_2< \dots< \beta_n\}$ for $\mathbb{K}$ over $\mathbb{F}$ such that 
$\Tr_{\mathbb{K}/\mathbb{F}}(\alpha_i\beta_j)=\delta_{ij}.$ Because of these properties, the trace map turns out to be an important tool in the theory of fields. As its application, in coding theory, the trace map is used to construct ``trace codes" and it also helps in the computation of ``subfield codes"\cite{ding2019subfield}.
\par
Motivated by the wide range of applications of the trace map, we got curious to find if there is an analogue to the trace map for a finite-dimensional $\mathbb{F}$-algebra. In this manuscript, we study the trace maps of finite dimensional commutative $\mathbb{F}$-algebras. For a commutative ring $R$ with unity, and an $R$-algebra $A,$ a surjective $R$-linear map $\tau: A\to R$ is called a $R$-valued trace if $\ker(\tau)$ contains no non-zero left ideals of $A$ (see \cite{generalized}, \cite{liu2018two}, \cite{lu2020linear}). Not every finite-dimensional $\mathbb{F}$-algebra has an $\mathbb{F}$-valued trace. For example, $\mathbb{F}_2[u,v]/\langle u^2, v^2, uv\rangle$ does not admit any $\mathbb{F}_2$-valued trace. In this article, we show by construction that a finite-dimensional commutative $\mathbb{F}$-algebra of the form $\mathcal{R}=\mathbb{F}[x_1, x_2,\dots, x_n]/\langle g_1(x_1), g_2(x_2),\dots, g_n(x_n)\rangle,$ $g_i(x_i)\in \mathbb{F}[x_i],$ has an $\mathbb{F}$-valued trace.
\par
Let $\mathbb{K}$ be a finite field extension of $\mathbb{F}$, then $\mathbb{K}$ can be viewed as a vector space over $\mathbb{F}$. The multiplication by $\alpha\in \mathbb{K},$ $\mathfrak{m}_{\alpha}:\mathbb{K}\to \mathbb{K}$ given by $x\mapsto\alpha x$ is an $\mathbb{F}$-linear transformation. Then $\Tr_{\mathbb{K}/\mathbb{F}}(\alpha)$ is defined as the trace of the linear operator $\mathfrak{m}_{\alpha}.$ One may try to extend this definition to finite-dimensional $\mathbb{F}$-algebras as they are $\mathbb{F}$-vector spaces. But, sometimes this map turns out to be the zero map. For instance, if $\mathcal{R}=\mathbb{F}_2[x]/\langle x^2\rangle,$ then $\Tr_{\mathcal{R}/\mathbb{F}_2}(a+bx+\langle x^2\rangle)=0,$ for all $a,b\in\mathbb{F}_2.$ Thus, the obvious generalization of trace does not lead to a map that possesses the expected properties of the trace. However, in the case of $\mathcal{R}=\mathbb{F}_2[x]/\langle x^2\rangle,$ $a+bx+\langle x^2\rangle\mapsto a+b$ satisfies the desired properties of a trace map.
\par
In recent years, linear codes over certain $\mathbb{F}_q$-algebras(see \cite{sagar2022certain}, \cite{ZHU20112677}, \cite{shi2017constacyclic}, \cite{gao2015some}) are studied and from them, by means of Gray maps, one produces codes over $\mathbb{F}_q.$ Other means of obtaining codes over $\mathbb{F}_q$ from codes over an $\mathbb{F}_q$-algebra could be using a trace map, for instance, the trace code. Moreover, the subfield codes can be determined with the help of an $\mathbb{F}_q$-valued trace map, if exists, from codes over $\mathbb{F}_q$-algebras.
\par
The remaining sections are arranged as follows. Preliminaries are provided in Section \ref{Section 2}. The main section is Section \ref{Section 3}, where we show the existence and construction of an $\mathbb{F}$-valued trace of $\mathcal{R}=\mathbb{F}[x_1, x_2,\dots, x_n]/\langle g_1(x_1), g_2(x_2),\dots, g_n(x_n)\rangle,$ $g_i(x_i)\in \mathbb{F}[x_i].$ Certain applications of $\mathbb{F}$-valued trace are discussed in Section \ref{Section 4} before we conclude the article in Section \ref{Section 5}.

\section{Preliminaries}\label{Section 2}
\begin{definition}
    A ring $\mathcal{R}$ with unity $1_{\mathcal{R}}$ is called an \textit{$\mathbb{F}$-algebra} if there is a ring homomorphism from $\mathbb{F}$ to $\mathcal{R}$ such that $1_{\mathbb{F}}\mapsto 1_{\mathcal{R}}$ and the image of $\mathbb{F}$ is contained in the center of $\mathcal{R}.$
\end{definition}
If $\mathcal{R}$ is an $\mathbb{F}$-algebra, then $\mathcal{R}$ is a vector space over $\mathbb{F}$ and if $\dim_\mathbb{F}(\mathcal{R})<\infty,$ then $\mathcal{R}$ is called a \textit{\emph{finite-dimensional}} $\mathbb{F}$-algebra.
\begin{definition}
    Let $\mathcal{R}$ be a commutative $\mathbb{F}$-algebra and let  $S=\{\bm{r}_1,\dots, \bm{r}_n\}\subseteq \mathcal{R}.$ If there exists a surjective ring homomorphism $\mathbb{F}[x_1,\dots, x_n]\to \mathcal{R}$ such that $x_i\mapsto \bm{r}_i,$ then  $S$ is called a set of generators of $\mathcal{R}$ over $\mathbb{F}$. In this case, we say that $\mathcal{R}$ is \textit{finitely generated} $\mathbb{F}$-algebra.
    \end{definition}
    Obviously, finite-dimensional $\mathbb{F}$-algebras are finitely generated.
\begin{proposition}\label{equivalent conditions for trace}
    Let $\mathcal{R}$ be a finite-dimensional commutative $\mathbb{F}$-algebra and let $\tau:\mathcal{R}\to \mathbb{F}$ be a non-zero $\mathbb{F}$-linear map. Then the following statements are equivalent.
    \begin{enumerate}
        \item[(a)] $\ker(\tau)$ does not contain any non-zero ideals of $\mathcal{R}.$
        \item[(b)] $f:\mathcal{R}\times \mathcal{R}\to \mathbb{F}$ defined by $f(\bm{x},\bm{y})=\tau(\bm{xy})$ is a non-degenerate bilinear form, that is, $\tau(\bm{xy})=0, \forall\, \bm{x}\in \mathcal{R} \implies \bm{y}=\bm{0}.$
    \end{enumerate}
\end{proposition}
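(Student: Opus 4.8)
The plan is to prove both implications directly from the definitions, using that $\mathcal{R}$ is commutative with unity, so that for any $\bm{y}\in\mathcal{R}$ the set $\mathcal{R}\bm{y}=\{\bm{xy}:\bm{x}\in\mathcal{R}\}$ is precisely the principal ideal generated by $\bm{y}$, and in particular it contains $\bm{y}=1_{\mathcal{R}}\bm{y}$.

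For (a) $\Rightarrow$ (b), I would first note that $f$ is bilinear since $\tau$ is $\mathbb{F}$-linear and multiplication in $\mathcal{R}$ is $\mathbb{F}$-bilinear. To check non-degeneracy, suppose $\bm{y}\in\mathcal{R}$ satisfies $\tau(\bm{xy})=0$ for all $\bm{x}\in\mathcal{R}$. Then the ideal $\mathcal{R}\bm{y}$ is contained in $\ker(\tau)$; by hypothesis (a) this forces $\mathcal{R}\bm{y}=\{\bm{0}\}$, and since $\bm{y}\in\mathcal{R}\bm{y}$ we conclude $\bm{y}=\bm{0}$. Commutativity also makes $f$ symmetric, so the one-sided non-degeneracy stated in (b) is equivalent to its two-sided version.

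For (b) $\Rightarrow$ (a), let $I$ be an ideal of $\mathcal{R}$ with $I\subseteq\ker(\tau)$. For any $\bm{y}\in I$ and any $\bm{x}\in\mathcal{R}$ we have $\bm{xy}\in I\subseteq\ker(\tau)$, hence $\tau(\bm{xy})=0$; non-degeneracy of $f$ then gives $\bm{y}=\bm{0}$, so $I=\{\bm{0}\}$.

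I do not expect a genuine obstacle here; the only points needing a word of care are that a principal ideal contains its generator (which uses that $\mathcal{R}$ has a unity, part of the definition of an $\mathbb{F}$-algebra here) and that bilinearity and symmetry of $f$ are immediate. Note that finite-dimensionality of $\mathcal{R}$ is not actually used in this equivalence — it becomes relevant only for the existence question treated later. One may also remark that the same argument works verbatim for left ideals in the non-commutative case.
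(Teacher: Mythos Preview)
Your proof is correct. The paper states this proposition without proof, so there is no argument to compare against; your direct verification via principal ideals $\mathcal{R}\bm{y}$ is the natural one, and your side remarks (that finite-dimensionality is not needed here, and that the argument extends to left ideals in the non-commutative setting) are accurate.
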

\begin{definition}\label{definition of trace}
    Let $\mathcal{R}$ be a finite-dimensional commutative $\mathbb{F}$-algebra. A non-zero $\mathbb{F}$-linear map $\tau:\mathcal{R}\to \mathbb{F}$ is called an \textit{$\mathbb{F}$-valued trace} of $\mathcal{R},$ if it satisfies any one of the statements of Proposition \ref{equivalent conditions for trace}.
\end{definition}
\begin{remark}
    In \cite{generalized}, the notion of trace (or Generalized Frobenius trace) is defined for a general ring. Let $S$ be a ring (need not be commutative) with unity and $R$ be a subring of $S$ sharing the same unity. Then a homomorphism of left $R$-modules $\Tr_R^S: S\to R$ is called a trace from $S$ to $R$ if it is surjective and $\ker(\Tr_R^S)$ does not contain any non-zero left ideals of $S.$ We note that if $R=\mathbb{F}$ and $S$ is an $\mathbb{F}$-algebra, then Definition \ref{definition of trace} is a special case of the definition in \cite{generalized}.
\end{remark}
Let $V$ be a finite-dimensional vector space over $\mathbb{F}$ and let $T$ be a linear operator on $V.$ 
\begin{definition}\cite{hoffman1971linear}
    If $\bm{\alpha}$ is a vector in $V$, then the subspace $Z(\bm{\alpha};T):=\{g(T)\bm{\alpha}: g(x)\in\mathbb{F}[x]\}$ is called the \textit{$T$-cyclic subspace generated by $\bm{\alpha}.$} If $Z(\bm{\alpha};T)=V,$ then $\bm{\alpha}$ is called a \textit{cyclic vector} for $T.$ 
\end{definition}
\begin{definition}\cite{hoffman1971linear}
    If $\bm{\alpha}$ is a vector in $V$, then then ideal $M(\bm{\alpha}; T):=\{g(x)\in \mathbb{F}[x]: g(T)\bm{\alpha}=0\}$ of $\mathbb{F}[x]$ is called the \textit{$T$-annihilator of $\bm{\alpha}.$}
\end{definition}
Since $\mathbb{F}[x]$ is an Euclidean domain, there is a unique monic polynomial $p_{\bm{\alpha}}(x)$ which generates  $M(\bm{\alpha}; T)$; the polynomial $p_{\bm{\alpha}}(x)$ is also called the $T$-annihilator of $\bm{\alpha}$. With these notations we note the following result.

\begin{theorem}[\cite{hoffman1971linear}]
    Let $\bm{\alpha}\in V\setminus\{\bm{0}\}.$ Then
    \begin{enumerate}
        \item $\dim_{\mathbb{F}} Z(\bm{\alpha};T)=\deg p_{\bm{\alpha}}(x)$
   
        \item If $\deg p_{\bm{\alpha}}(x)=k,$ then $\{\bm{\alpha}, T\bm{\alpha},\dots T^{k-1}\bm{\alpha} \}$ is a basis for $Z(\bm{\alpha};T).$ 
    \end{enumerate}
\end{theorem}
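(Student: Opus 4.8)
The plan is to reduce both assertions to an elementary fact about quotients of the polynomial ring $\mathbb{F}[x]$, by exhibiting a natural $\mathbb{F}$-linear isomorphism between $Z(\bm{\alpha};T)$ and $\mathbb{F}[x]/M(\bm{\alpha};T)$. Concretely, consider the evaluation map $\varepsilon_{\bm{\alpha}}:\mathbb{F}[x]\to V$ given by $g(x)\mapsto g(T)\bm{\alpha}$. It is $\mathbb{F}$-linear, its image is by definition exactly $Z(\bm{\alpha};T)$, and its kernel is by definition exactly $M(\bm{\alpha};T)=\langle p_{\bm{\alpha}}(x)\rangle$. Hence $\varepsilon_{\bm{\alpha}}$ induces an $\mathbb{F}$-vector space isomorphism $\mathbb{F}[x]/\langle p_{\bm{\alpha}}(x)\rangle\xrightarrow{\ \sim\ }Z(\bm{\alpha};T)$, carrying the coset of $x^i$ to $T^i\bm{\alpha}$.

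Next I would record the standard description of the quotient: if $k=\deg p_{\bm{\alpha}}(x)$, then the cosets of $1,x,\dots,x^{k-1}$ form an $\mathbb{F}$-basis of $\mathbb{F}[x]/\langle p_{\bm{\alpha}}(x)\rangle$. Spanning follows from the division algorithm in the Euclidean domain $\mathbb{F}[x]$: every $g$ is congruent modulo $p_{\bm{\alpha}}$ to its remainder $r$ with $r=0$ or $\deg r<k$. Linear independence follows from a degree count: a relation $c_0+c_1x+\dots+c_{k-1}x^{k-1}\in\langle p_{\bm{\alpha}}(x)\rangle$ would, unless all $c_i$ vanish, produce a nonzero multiple of $p_{\bm{\alpha}}$ of degree less than $k$, which is impossible. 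Transporting this basis through $\varepsilon_{\bm{\alpha}}$ shows that $\{\bm{\alpha},T\bm{\alpha},\dots,T^{k-1}\bm{\alpha}\}$ is a basis of $Z(\bm{\alpha};T)$, which is statement (2); counting its elements gives $\dim_{\mathbb{F}}Z(\bm{\alpha};T)=k=\deg p_{\bm{\alpha}}(x)$, which is statement (1). (If one prefers a fully self-contained argument avoiding the quotient, the same two computations can be done directly in $V$: given $g(T)\bm{\alpha}$, write $g=q\,p_{\bm{\alpha}}+r$ and use $p_{\bm{\alpha}}(T)\bm{\alpha}=\bm{0}$ to get $g(T)\bm{\alpha}=r(T)\bm{\alpha}$, an $\mathbb{F}$-combination of $\bm{\alpha},\dots,T^{k-1}\bm{\alpha}$; and any dependence among these vectors yields a nonzero polynomial of degree $<k$ annihilating $\bm{\alpha}$, contradicting minimality of $\deg p_{\bm{\alpha}}$.)

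I do not anticipate a real obstacle. The only points needing a word of care are that $M(\bm{\alpha};T)$ is a nonzero ideal, so that $p_{\bm{\alpha}}(x)$ genuinely exists, and that $p_{\bm{\alpha}}(x)$ is not a unit, so that $k\ge 1$; the former uses that $V$ is finite-dimensional, so the infinite list $\bm{\alpha},T\bm{\alpha},T^2\bm{\alpha},\dots$ is $\mathbb{F}$-linearly dependent, and the latter uses $\bm{\alpha}\neq\bm{0}$ together with the fact, already invoked in the excerpt, that $\mathbb{F}[x]$ is a principal ideal domain whose ideals have unique monic generators. Everything beyond that is the division algorithm and a degree count.
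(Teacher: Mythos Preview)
Your argument is correct and is the standard proof: the evaluation map $g(x)\mapsto g(T)\bm{\alpha}$ has image $Z(\bm{\alpha};T)$ and kernel $\langle p_{\bm{\alpha}}(x)\rangle$, so the division algorithm gives the basis $\{\bm{\alpha},T\bm{\alpha},\dots,T^{k-1}\bm{\alpha}\}$. Note, however, that the paper does not supply its own proof of this statement; it is quoted as a preliminary result from the cited reference \cite{hoffman1971linear}, where essentially the same division-with-remainder argument appears, so there is nothing to compare against beyond observing that your proof matches the textbook one.
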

The following result that describes a basis of a tensor product of vector spaces is helpful for computation.
\begin{proposition}
    Let $V$ and $W$ be finite-dimensional vector spaces over $\mathbb{F}$ with bases $\{\bm{\alpha}_1\dots,\bm{\alpha}_m\}$ and $\{\bm{\beta}_1\dots,\bm{\beta}_n\}$ respectively. Then $\{\bm{\alpha}_i\otimes\bm{\beta}_j:1\leq i\leq m, 1\leq j\leq n\}$ is a basis for $V\otimes_{\mathbb{F}}W.$
\end{proposition}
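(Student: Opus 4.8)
The plan is to verify the two defining properties of a basis for the set $\{\bm{\alpha}_i\otimes\bm{\beta}_j\}$: that it spans $V\otimes_{\mathbb{F}}W$, and that it is linearly independent over $\mathbb{F}$.

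For spanning, I would use that $V\otimes_{\mathbb{F}}W$ is generated as an $\mathbb{F}$-vector space by the simple tensors $\bm{v}\otimes\bm{w}$, $\bm{v}\in V$, $\bm{w}\in W$. Writing $\bm{v}=\sum_{i=1}^{m}a_i\bm{\alpha}_i$ and $\bm{w}=\sum_{j=1}^{n}b_j\bm{\beta}_j$ with $a_i,b_j\in\mathbb{F}$, bilinearity of the map $(\bm{v},\bm{w})\mapsto\bm{v}\otimes\bm{w}$ gives $\bm{v}\otimes\bm{w}=\sum_{i,j}a_ib_j\,(\bm{\alpha}_i\otimes\bm{\beta}_j)$. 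Hence every simple tensor, and therefore every element of $V\otimes_{\mathbb{F}}W$, lies in the span of the proposed set.

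For linear independence, the key tool is the universal property of the tensor product. Fix $k\in\{1,\dots,m\}$ and $l\in\{1,\dots,n\}$, and let $\bm{\alpha}_k^{*}:V\to\mathbb{F}$ and $\bm{\beta}_l^{*}:W\to\mathbb{F}$ be the coordinate functionals dual to the given bases, so that $\bm{\alpha}_k^{*}(\bm{\alpha}_i)=\delta_{ki}$ and $\bm{\beta}_l^{*}(\bm{\beta}_j)=\delta_{lj}$. The map $V\times W\to\mathbb{F}$, $(\bm{v},\bm{w})\mapsto\bm{\alpha}_k^{*}(\bm{v})\,\bm{\beta}_l^{*}(\bm{w})$, is $\mathbb{F}$-bilinear, so by the universal property it factors as $\varphi_{kl}(\bm{v}\otimes\bm{w})$ for a unique $\mathbb{F}$-linear map $\varphi_{kl}:V\otimes_{\mathbb{F}}W\to\mathbb{F}$, which then satisfies $\varphi_{kl}(\bm{\alpha}_i\otimes\bm{\beta}_j)=\delta_{ki}\delta_{lj}$. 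Now if $\sum_{i,j}c_{ij}\,(\bm{\alpha}_i\otimes\bm{\beta}_j)=\bm{0}$ with $c_{ij}\in\mathbb{F}$, applying $\varphi_{kl}$ yields $c_{kl}=0$; as $k,l$ were arbitrary, all $c_{ij}$ vanish.

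The step I expect to require the most care is linear independence, since spanning is purely formal; everything there hinges on producing enough linear functionals on $V\otimes_{\mathbb{F}}W$ to separate the candidate basis vectors, which the universal property supplies without any prior knowledge of $\dim_{\mathbb{F}}(V\otimes_{\mathbb{F}}W)$. Indeed, this proposition is one of the standard ways to conclude $\dim_{\mathbb{F}}(V\otimes_{\mathbb{F}}W)=mn$, so a dimension-count shortcut would be circular. If one wished to avoid explicit appeal to the universal property, an alternative is to realize $V\otimes_{\mathbb{F}}W$ as the free $\mathbb{F}$-vector space on $V\times W$ modulo the bilinearity relations and check the claim directly there, but the functional argument above is the cleanest route.
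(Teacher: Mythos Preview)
Your proof is correct and is the standard argument. The paper, however, states this proposition without proof in its preliminaries section, treating it as a known background fact about tensor products; there is no proof in the paper to compare against.
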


\begin{theorem}\label{tensorproduct}
    Let, for $1\le i\le n,$ $g_i(x)\in\mathbb{F}[x]$. Then  
    $$\frac{\mathbb{F}[x_1, x_2,\dots, x_n]}{\langle g_1(x_1), g_2(x_2),\dots, g_n(x_n)\rangle} \cong \frac{\mathbb{F}[x_1]}{\langle g_1(x_1)\rangle}\otimes_{\mathbb{F}}\frac{\mathbb{F}[x_2]}{\langle g_2(x_2)\rangle}\otimes_{\mathbb{F}}\cdots\otimes_{\mathbb{F}}\frac{\mathbb{F}[x_n]}{\langle g_n(x_n)\rangle}.$$
\end{theorem}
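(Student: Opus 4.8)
The plan is to construct explicit, mutually inverse $\mathbb{F}$-algebra homomorphisms between the two sides, using the universal property of the tensor product in one direction and the universal property of the polynomial ring (as the free commutative $\mathbb{F}$-algebra) in the other. Write $\mathcal{A}_i=\mathbb{F}[x_i]/\langle g_i(x_i)\rangle$, let $\mathcal{A}=\mathcal{A}_1\otimes_{\mathbb{F}}\cdots\otimes_{\mathbb{F}}\mathcal{A}_n$ (an $\mathbb{F}$-algebra under the componentwise product $(a_1\otimes\cdots\otimes a_n)(b_1\otimes\cdots\otimes b_n)=a_1b_1\otimes\cdots\otimes a_nb_n$), and let $I=\langle g_1(x_1),\dots,g_n(x_n)\rangle$ be the defining ideal in $\mathbb{F}[x_1,\dots,x_n]$.

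First I would define a map $\mathcal{A}_1\times\cdots\times\mathcal{A}_n\to\mathbb{F}[x_1,\dots,x_n]/I$ by $(\overline{f_1},\dots,\overline{f_n})\mapsto\overline{f_1\cdots f_n}$. To see this is well defined, observe that $\langle g_i(x_i)\rangle$ computed in $\mathbb{F}[x_i]$ is contained in $I$, so if $f_i-f_i'\in\langle g_i(x_i)\rangle$ for each $i$, then the telescoping identity $f_1\cdots f_n-f_1'\cdots f_n'=\sum_{k=1}^{n}f_1'\cdots f_{k-1}'(f_k-f_k')f_{k+1}\cdots f_n$ shows $f_1\cdots f_n-f_1'\cdots f_n'\in I$. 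This map is $\mathbb{F}$-multilinear, so by the universal property of $\otimes_{\mathbb{F}}$ it induces an $\mathbb{F}$-linear map $\Phi:\mathcal{A}\to\mathbb{F}[x_1,\dots,x_n]/I$ with $\Phi(\overline{f_1}\otimes\cdots\otimes\overline{f_n})=\overline{f_1\cdots f_n}$. Since $\Phi(1\otimes\cdots\otimes 1)=\overline{1}$ and $\Phi$ is multiplicative on elementary tensors (hence everywhere, by bilinearity of the two multiplications), $\Phi$ is a homomorphism of $\mathbb{F}$-algebras.

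For the inverse, I would use that $\mathbb{F}[x_1,\dots,x_n]$ is the free commutative $\mathbb{F}$-algebra on $x_1,\dots,x_n$: since $\mathcal{A}$ is a commutative $\mathbb{F}$-algebra, the assignment $x_i\mapsto 1\otimes\cdots\otimes\overline{x_i}\otimes\cdots\otimes 1$ (with $\overline{x_i}$ in the $i$-th slot) extends uniquely to an $\mathbb{F}$-algebra homomorphism $\Psi:\mathbb{F}[x_1,\dots,x_n]\to\mathcal{A}$. Then $\Psi(g_i(x_i))=1\otimes\cdots\otimes\overline{g_i(x_i)}\otimes\cdots\otimes 1=0$ because $\overline{g_i(x_i)}=0$ in $\mathcal{A}_i$; hence $I\subseteq\ker\Psi$, and $\Psi$ factors through an $\mathbb{F}$-algebra homomorphism $\overline{\Psi}:\mathbb{F}[x_1,\dots,x_n]/I\to\mathcal{A}$. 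Finally I would check that $\Phi\circ\overline{\Psi}$ and $\overline{\Psi}\circ\Phi$ are the identity; since both are $\mathbb{F}$-algebra homomorphisms it suffices to verify this on algebra generators — on $\overline{x_i}$ on the polynomial side and on $1\otimes\cdots\otimes\overline{x_i}\otimes\cdots\otimes 1$ on the tensor side — which is immediate from the definitions of $\Phi$ and $\Psi$.

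The argument is essentially bookkeeping; the only points that need a moment's care are the well-definedness of $\Phi$ (the telescoping step modulo $I$) and the remark that the construction is uniform, requiring no hypothesis on the $g_i$: in particular neither side need be finite-dimensional (and either may be the zero algebra), yet the isomorphism holds. An alternative, equally short route would be induction on $n$ reducing to the case $n=2$, combined with right-exactness of $-\otimes_{\mathbb{F}}\mathcal{A}_i$ applied to the exact sequence $\mathbb{F}[x_i]\xrightarrow{\,\cdot\, g_i(x_i)}\mathbb{F}[x_i]\to\mathcal{A}_i\to 0$; but the direct construction above avoids invoking right-exactness and produces the explicit isomorphism used in the sequel.
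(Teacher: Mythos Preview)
Your argument is correct and complete: the construction of $\Phi$ via multilinearity (with the telescoping check for well-definedness), the construction of $\overline{\Psi}$ via the universal property of the polynomial ring, and the verification on algebra generators are all standard and sound. Note, however, that the paper does not actually prove this theorem; it is stated in the Preliminaries section as a known result without proof, so there is no ``paper's own proof'' to compare against. Your write-up would serve perfectly well as the missing justification, and your closing remark about the alternative route via right-exactness of $-\otimes_{\mathbb{F}}\mathcal{A}_i$ is also a valid (and commonly cited) approach.
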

\section{Trace on finite dimensional commutative $\mathbb{F}$-algebras}\label{Section 3}
 
\begin{theorem}\label{maintheorem}
    Let $g(x)\in \mathbb{F}[x]$ be an irreducible polynomial over $\mathbb{F}$ of degree $n$, and let $T:\mathbb{F}[x]/\langle g(x)\rangle\to \mathbb{F}$ be a non-zero functional. Set $\mathcal{R}=\mathbb{F}[x]/\langle g(x)^r\rangle,$ for $r\in \mathbb{N}.$ Then the map $\tau:\mathcal{R}\to \mathbb{F}$ given by 
        $$p_0(x)+p_1(x)g(x)+\dots+p_{r-1}(x)g(x)^{r-1} +\langle g(x)^r\rangle\mapsto \sum_{i=0}^{r-1}T(p_i(x)+\langle g(x)\rangle)$$  
    is an $\mathbb{F}$-valued trace of $\mathcal{R}.$
\end{theorem}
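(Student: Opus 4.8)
The plan is to verify condition~(a) of Proposition~\ref{equivalent conditions for trace}: that $\tau$ is a well-defined, non-zero $\mathbb{F}$-linear map whose kernel contains no non-zero ideal of $\mathcal{R}$. The structural fact that drives everything is that, since $g(x)$ is irreducible, $\mathcal{R}=\mathbb{F}[x]/\langle g(x)^r\rangle$ is a chain ring: an ideal of $\mathcal{R}$ pulls back to an ideal $\langle h(x)\rangle$ of $\mathbb{F}[x]$ with $h(x)\mid g(x)^r$, hence $h(x)$ is, up to a unit, a power of $g(x)$, so the ideals of $\mathcal{R}$ are exactly $\langle g(x)^j\rangle/\langle g(x)^r\rangle$ for $0\le j\le r$, and they form the strictly decreasing chain $\mathcal{R}\supsetneq\langle g\rangle\supsetneq\cdots\supsetneq\langle g^{r-1}\rangle\supsetneq\{0\}$. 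Consequently every non-zero ideal contains the unique minimal non-zero ideal $\mathcal{I}:=\langle g(x)^{r-1}\rangle/\langle g(x)^r\rangle$, and it suffices to prove that $\tau$ is well-defined, $\mathbb{F}$-linear, non-zero, and does not vanish identically on $\mathcal{I}$.

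First I would make the description of $\tau$ precise. Each element of $\mathcal{R}$ has a \emph{unique} representative $\sum_{i=0}^{r-1}p_i(x)g(x)^i$ with $\deg p_i<n=\deg g$ (its ``$g$-adic'' expansion, produced by repeated division by $g$), and $\tau$ is to be read off from this normalised representative; this is the point at which well-definedness must be secured, since the displayed formula would be ambiguous if the $p_i$ were allowed arbitrary degree. Linearity is then routine: if $a=\sum a_ig^i$ and $b=\sum b_ig^i$ are the normalised expansions of $a,b\in\mathcal{R}$, then $\deg(a_i+b_i)<n$ and $\deg(ca_i)<n$, so $\sum(a_i+b_i)g^i$ and $\sum(ca_i)g^i$ are again normalised and represent $a+b$ and $ca$; linearity of $\tau$ then follows from linearity of $T$. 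That $\tau\neq 0$ is seen by taking any $p_0$ with $T(p_0+\langle g\rangle)\neq 0$ and $p_1=\cdots=p_{r-1}=0$.

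Finally I would evaluate $\tau$ on $\mathcal{I}$. An element of $\mathcal{I}$ can be written $c(x)g(x)^{r-1}+\langle g(x)^r\rangle$ with $\deg c<n$, and this is already its normalised $g$-adic form ($p_{r-1}=c$, all other $p_i=0$), so $\tau\bigl(c(x)g(x)^{r-1}+\langle g(x)^r\rangle\bigr)=T\bigl(c(x)+\langle g(x)\rangle\bigr)$; in fact $\mathcal{I}\to\mathbb{F}[x]/\langle g\rangle$, $c(x)g(x)^{r-1}\mapsto c(x)+\langle g\rangle$, is an $\mathbb{F}$-linear isomorphism under which $\tau|_{\mathcal{I}}$ corresponds to $T$. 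Since $T\neq 0$, we get $\mathcal{I}\not\subseteq\ker(\tau)$, and because every non-zero ideal of $\mathcal{R}$ contains $\mathcal{I}$, no non-zero ideal lies in $\ker(\tau)$; by Definition~\ref{definition of trace}, $\tau$ is then an $\mathbb{F}$-valued trace. I expect the only genuinely delicate points to be this well-definedness matter — fixing the normalised $g$-adic representation so that the defining formula is unambiguous — together with the classification of the ideals of the chain ring $\mathcal{R}$; once these are in place the remainder is immediate.
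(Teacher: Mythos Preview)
Your argument is correct and complete, but it proceeds along a genuinely different line from the paper. The paper verifies condition~(b) of Proposition~\ref{equivalent conditions for trace} (non-degeneracy of the bilinear form) via a linear-algebra computation: it identifies $\mathcal{R}$ with $\mathbb{F}^{nr}$, realises $\tau$ as the dot product against the vector $\overline{\bm{s}}=(\bm{s},\ldots,\bm{s})$ (where $\bm{s}$ represents $T$), computes the matrix $A$ of multiplication by $x$, shows that the minimal polynomial of $A$ is $g(x)^r$, checks that $\overline{\bm{s}}$ is a cyclic vector for $A^t$, and concludes that $\tau(x^jp)=0$ for all $j$ forces $p=0$. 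You instead verify condition~(a) directly by exploiting the chain-ring structure of $\mathcal{R}$: every non-zero ideal contains the minimal ideal $\mathcal{I}=\langle g^{r-1}\rangle$, and $\tau|_{\mathcal{I}}$ is just $T$ under the obvious identification, hence non-zero.

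Your route is shorter and more conceptual; the reduction to the unique minimal ideal makes the role of the hypothesis $T\neq 0$ completely transparent, and it avoids the cyclic-vector bookkeeping. The paper's approach, on the other hand, produces explicit coordinate formulas (the matrix $g(A)$ and the vector $\overline{\bm{s}}$) that are closer in spirit to the computational applications later in the paper, and its framework carries over verbatim to the tensor-product argument in Theorem~\ref{main proposition}, where no convenient minimal ideal is available.
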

\begin{proof} Note that $\mathcal{B}:=\{\bm{e}_{i,j}=x^ig(x)^j: 0\leq i\leq n-1, 0\leq j\leq r-1\}$ is a basis for $\mathcal{R}$ and hence, every element $p(x)+\langle g(x)^r\rangle$ of $\mathcal{R}$ can be uniquely expressed in the form $p_0(x)+p_1(x)g(x)+\cdots+p_{r-1}(x)g(x)^{r-1}+\langle g(x)^r\rangle,$ where $p_i(x)\in \mathbb{F}[x]$ with $\deg p_i(x)\leq n-1.$ For convenience, identify $\mathbb{F}[x]/\langle g(x)\rangle$ with $\mathbb{F}^n$ via the map $f(x)+\langle g(x)\rangle\mapsto \bm{f}=(a_0,a_1,\dots,a_{n-1})\in \mathbb{F}^n,$ if $f(x)=a_0+a_1x+\cdots+a_{n-1}x^{n-1}.$ In turn, get the identification of $\mathcal{R}$ with $\mathbb{F}^{nr}$ as $\mathbb{F}$-vector spaces via the vector space isomorphism
\begin{equation}\label{isomorphism}
    p(x)+\langle g(x)^r\rangle\mapsto\overline{\bm{p}}=(\bm{p}_0, \bm{p}_1,\dots, \bm{p}_{r-1})
\end{equation}
where $\bm{p}_i=(p_{i,0}, p_{i,1},\dots,p_{i,(n-1)})\in \mathbb{F}^n.$
\par
Since $T$ is a non-zero functional on $\mathbb{F}[x]/\langle g(x)\rangle,$ there exists $\bm{s}=(s_0,\dots,s_{n-1})\in \mathbb{F}^n\setminus\{\bm{0}\}$ such that $T(\bm{a})=\bm{s}\cdot\bm{a},$ the usual dot product of $\bm{s}$ and $\bm{a}.$ With the above identification, $\tau$ can be expressed as
\begin{equation}
    \tau(\overline{\bm{p}})=\sum_{i=0}^{r-1}\bm{s}\cdot \bm{p}_i.
\end{equation}
Next consider the operator
    \begin{align*}
        \mathfrak{m}_x: \mathcal{R}&\to \mathcal{R}\\
        p(x)+\langle g(x)^r\rangle&\mapsto xp(x)+\langle g(x)^r\rangle.
    \end{align*}
    Order $\mathcal{B}$ such that $\bm{e}_{i,j}<\bm{e}_{k,l}$ if either $j=l$ and $i<k$  or $j<l.$   Suppose $A=[\mathfrak{m}_x]_{\mathcal{B}}$ is the matrix of $\mathfrak{m}_x$ relative to the ordered basis $\mathcal{B}.$ Now, for $1\leq j\leq r-1,$ compute $g(x)^j\left(p(x)+\langle g(x)^r\rangle\right)$. Observe, 
    \begin{align*}
        g(x)^j\left(\underset{i=0}{\overset{r-1}{\sum}}p_{i}(x)g(x)^{i}\right)&=\underset{i=j}{\overset{r-1}{\sum}}p_{i-j}(x)g(x)^{i}
        \end{align*}
        and
        \begin{align*}
        g(x)^r\left(\underset{i=0}{\overset{r-1}{\sum}}p_{i}(x)g(x)^{i}\right)&=\bm{0}.
        \end{align*}
    Hence,
    \[
  g(A)=
  \left( {\begin{array}{ccccc}
    O & O & \cdots & O & O\\
    I & O & \cdots & O & O\\
    O & I & \cdots & O & O\\
    \vdots & \vdots & \ddots & \vdots & \vdots\\
    O & O & \cdots & I & O\\
  \end{array} } \right)_{nr\times nr}
\]
    where $O$ is the null matrix of order $n\times n$ and $I$ is the identity matrix of order $n\times n$ over $\mathbb{F}.$ Observe that $r$ is the least positive integer such that $g(A)^r=O.$ Hence, the minimal polynomial of $\mathfrak{m}_x$ divides $g(x)^r.$ Since $g(x)$ is irreducible and $g(x)^i$ does not annihilate $\mathfrak{m}_x$ for $0\leq i\leq r-1,$ the minimal polynomial of $\mathfrak{m}_x$ is $g(x)^r.$
    \par
    Consider $\overline{\bm{s}}:=(\bm{s}, \bm{s},\dots, \bm{s})\in \mathbb{F}^{nr},$ where $\bm{s}$ is identified with $s_0+s_1x+\dots +s_{n-1}x^{n-1}\in \mathbb{F}[x]/\langle g(x) \rangle.$ Since $\mathfrak{m}_x$-annihilator of $\overline{\bm{s}}$ is $g(x)^r$ and $\deg g(x)^r=nr=\dim_{\mathbb{F}} \mathcal{R},$ the vector $\overline{\bm{s}}$ is a cyclic vector for $\mathfrak{m}_x$ and $\mathcal{B}^{'}=\{\overline{\bm{s}}^t, A\overline{\bm{s}}^t,\dots, A^{nr-1}\overline{\bm{s}}^t\}$ is a basis for $\mathcal{R}$ via the identification given in (\ref{isomorphism}).
    \par
    Now,
    \begin{equation*}
        \tau\left(x^j(p(x)+\langle g(x)^r\rangle)\right)=\tau(A^j\overline{\bm{p}})=\overline{\bm{s}}\cdot \left(A^j\overline{\bm{p}}\right)=\left(\overline{\bm{s}}A^j\right)\cdot \overline{\bm{p}}=({A^j}^t\overline{\bm{s}}^t)^t\cdot \overline{\bm{p}}.
    \end{equation*}
     If $\tau\left(x^j(p(x)+\langle g(x)^r\rangle)\right)=0, \forall\,0\leq j\leq nr-1,$ then $({A^j}^t\overline{\bm{s}}^t)^t\cdot \overline{\bm{p}}=0, \forall\,0\leq j\leq nr-1, \,\text{we have}\,\, \overline{\bm{p}}=\overline{\bm{0}}$ as $\mathcal{B}^{'}$ is a basis for $\mathcal{R}$ and $\overline{\bm{s}}$ is also a cyclic vector for $A^t.$ 
\end{proof}
\begin{remark}
The usual trace map $T=\Tr_{\mathbb{F}[\alpha]/\mathbb{F}}$, where $\alpha=x+\langle g(x)\rangle$ is always a choice of $T$ in Theorem \ref{maintheorem}.
\end{remark}
\begin{lemma}\label{trace on direct sum}
    Let $\mathcal{R}_1, \mathcal{R}_2,\dots,\mathcal{R}_s$ be finite-dimensional commutative $\mathbb{F}$-algebras. If each $\mathcal{R}_i$ admits an $\mathbb{F}$-valued trace, then so does $\prod_{i=1}^{s}\mathcal{R}_i.$
\end{lemma}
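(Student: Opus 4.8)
The plan is to build an $\mathbb{F}$-valued trace on $\mathcal{R}=\prod_{i=1}^{s}\mathcal{R}_i$ directly from the given traces $\tau_i\colon\mathcal{R}_i\to\mathbb{F}$ by summing the components. Concretely, define $\tau\colon\mathcal{R}\to\mathbb{F}$ by
\begin{equation*}
\tau(\bm{r}_1,\bm{r}_2,\dots,\bm{r}_s)=\sum_{i=1}^{s}\tau_i(\bm{r}_i).
\end{equation*}
This is clearly $\mathbb{F}$-linear as a sum of $\mathbb{F}$-linear maps, and it is non-zero because each $\tau_i$ is non-zero (e.g.\ evaluate on an element supported in a single coordinate). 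So the only thing to check is condition (a) of Proposition~\ref{equivalent conditions for trace}: $\ker(\tau)$ contains no non-zero ideal of $\mathcal{R}$.

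First I would recall the standard fact that every ideal $J$ of the product ring $\mathcal{R}=\prod_i\mathcal{R}_i$ decomposes as $J=\prod_i J_i$, where $J_i=\pi_i(J)$ is an ideal of $\mathcal{R}_i$ and $\pi_i$ is the $i$-th projection; this uses the idempotents $\bm{e}_i=(0,\dots,1_{\mathcal{R}_i},\dots,0)$, since $\bm{e}_i J\subseteq J$ shows $J$ is the (internal) direct sum of the $\bm{e}_iJ$, and $\bm{e}_iJ$ corresponds to $J_i$ under the $i$-th coordinate. Now suppose $J\subseteq\ker(\tau)$ and $J\neq\{\bm{0}\}$; pick $i$ with $J_i\neq\{\bm{0}\}$. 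For any $\bm{r}_i\in J_i$ there is an element $\bm{r}\in J$ with $i$-th coordinate $\bm{r}_i$, and then $\bm{e}_i\bm{r}=(0,\dots,\bm{r}_i,\dots,0)\in J\subseteq\ker(\tau)$, so $\tau(\bm{e}_i\bm{r})=\tau_i(\bm{r}_i)=0$. Hence $\tau_i$ vanishes on the non-zero ideal $J_i$ of $\mathcal{R}_i$, contradicting the hypothesis that $\tau_i$ is an $\mathbb{F}$-valued trace of $\mathcal{R}_i$. Therefore no such $J$ exists, and $\tau$ is an $\mathbb{F}$-valued trace of $\mathcal{R}$.

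Alternatively, and perhaps more cleanly, one can argue via condition (b): using the bilinear form $f_i(\bm{x}_i,\bm{y}_i)=\tau_i(\bm{x}_i\bm{y}_i)$ on each $\mathcal{R}_i$, the form $f(\bm{x},\bm{y})=\tau(\bm{x}\bm{y})=\sum_i f_i(\bm{x}_i,\bm{y}_i)$ on $\mathcal{R}$ is an orthogonal direct sum of non-degenerate forms, hence non-degenerate; explicitly, if $f(\bm{x},\bm{y})=0$ for all $\bm{x}$, then taking $\bm{x}=\bm{e}_i\bm{x}_i$ with $\bm{x}_i$ arbitrary forces $f_i(\bm{x}_i,\bm{y}_i)=0$ for all $\bm{x}_i$, so $\bm{y}_i=\bm{0}$ for each $i$, i.e.\ $\bm{y}=\bm{0}$. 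The case $s=2$ gives the statement, and induction (or the obvious associativity of the construction) handles general $s$. I do not anticipate a genuine obstacle here; the only mild subtlety is making sure the ideal-decomposition fact for product rings is invoked correctly, which is why I would lean on the bilinear-form formulation to keep the argument short and self-contained.
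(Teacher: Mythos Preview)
Your proposal is correct. The paper's own proof is precisely your ``alternative'' argument via condition~(b): it defines $\tau(\bm{x})=\sum_i\tau_i(x_i)$, takes a non-zero $\bm{x}$ with (say) $x_1\neq 0$, picks $y_{x_1}$ with $\tau_1(x_1y_{x_1})\neq 0$, and sets $\bm{y}=(y_{x_1},0,\dots,0)$ to get $\tau(\bm{x}\bm{y})\neq 0$. Your first argument through the ideal decomposition $J=\prod_i J_i$ and condition~(a) is also valid, but it is a little more machinery than necessary; the bilinear-form route you sketched second is the cleaner one and is exactly what the paper does.
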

\begin{proof} 
    Suppose for each $i,$ $\tau_i:\mathcal{R}_i\to \mathbb{F}$ is an $\mathbb{F}$-valued trace of $\mathcal{R}_i$. Set $\mathcal{R}=\prod_{i=1}^{s}\mathcal{R}_i$ and define
    \begin{align*}
        \tau:\mathcal{R}&\to \mathbb{F}\\
        \bm{x}=(x_1,\dots,x_s)&\mapsto \sum_{i=1}^{s}\tau_i(x_i).
    \end{align*}
    Then $\tau$ is clearly a non-zero $\mathbb{F}$-linear map from $\mathcal{R}$ to $\mathbb{F}.$ Suppose $\tau(\bm{x})=0,$ where $\bm{x}=(x_1,\dots,x_s)\in \mathcal{R}\setminus\{\bm{0}\}.$ Without loss of generality, assume that $x_1\neq0.$ Then there is $y_{x_1}\in \mathcal{R}_1\setminus\{0\}$ such that $\tau_1(x_1y_{x_1})\neq 0.$ If $\bm{y}=(y_{x_1},0,\dots,0)\in \mathcal{R},$ then 
    \begin{equation*}
        \tau(\bm{xy})=\tau(x_1y_{x_1},0,\dots,0)=\tau_1(x_1y_{x_1})\neq0.
    \end{equation*}
\end{proof}
\begin{theorem}\label{existence of trace}
    Let $\mathcal{R}$ be a finite-dimensional commutative $\mathbb{F}$-algebra generated by only one element. Then there is an $\mathbb{F}$-valued trace of $\mathcal{R}.$
\end{theorem}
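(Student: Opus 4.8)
The plan is to reduce the claim to the two results already in hand: Theorem \ref{maintheorem}, which produces an $\mathbb{F}$-valued trace on $\mathbb{F}[x]/\langle g(x)^r\rangle$ for $g$ irreducible, and Lemma \ref{trace on direct sum}, which assembles a trace on a finite product from traces on the factors. The bridge between them will be the Chinese Remainder Theorem.

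First I would fix a single generator $\alpha$ of $\mathcal{R}$ over $\mathbb{F}$; the associated surjection $\mathbb{F}[x]\to\mathcal{R}$, $x\mapsto\alpha$, has kernel a principal ideal $\langle f(x)\rangle$ since $\mathbb{F}[x]$ is a PID, and $f\neq 0$ because $\dim_{\mathbb{F}}\mathcal{R}<\infty$. Thus $\mathcal{R}\cong\mathbb{F}[x]/\langle f(x)\rangle$ as $\mathbb{F}$-algebras, and we may assume $\deg f\geq 1$. Observe that an isomorphism of $\mathbb{F}$-algebras carries an $\mathbb{F}$-valued trace to an $\mathbb{F}$-valued trace, since it sends ideals to ideals and the kernel of the transported functional is the image of the original kernel; hence it suffices to construct an $\mathbb{F}$-valued trace on $\mathbb{F}[x]/\langle f(x)\rangle$.

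Next I would factor $f(x)=c\,g_1(x)^{r_1}\cdots g_s(x)^{r_s}$ into powers of pairwise non-associate monic irreducibles $g_i\in\mathbb{F}[x]$ with $r_i\geq 1$ and $c\in\mathbb{F}^{\times}$, which is possible as $\mathbb{F}[x]$ is a UFD. The ideals $\langle g_i(x)^{r_i}\rangle$ are pairwise comaximal, so the Chinese Remainder Theorem yields an $\mathbb{F}$-algebra isomorphism
$$\mathbb{F}[x]/\langle f(x)\rangle\;\cong\;\prod_{i=1}^{s}\mathbb{F}[x]/\langle g_i(x)^{r_i}\rangle .$$
By Theorem \ref{maintheorem}, each factor $\mathbb{F}[x]/\langle g_i(x)^{r_i}\rangle$ admits an $\mathbb{F}$-valued trace (for instance, taking any non-zero functional $T$ as permitted by that theorem, such as $\Tr_{\mathbb{F}[x]/\langle g_i(x)\rangle/\mathbb{F}}$). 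Lemma \ref{trace on direct sum} then produces an $\mathbb{F}$-valued trace on the product, and transporting it back through the isomorphisms above gives an $\mathbb{F}$-valued trace on $\mathcal{R}$.

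I do not expect a genuine obstacle here; the work is organizational. The only points meriting a sentence of care are that the CRT isomorphism is one of $\mathbb{F}$-algebras, so that the multiplicative structure — and hence the defining property of a trace — is respected, and that the property of ``being an $\mathbb{F}$-valued trace'' is preserved under $\mathbb{F}$-algebra isomorphisms; both are immediate from the definitions. If one wished to bypass an explicit appeal to CRT, the same decomposition follows from the fact that the Artinian ring $\mathbb{F}[x]/\langle f(x)\rangle$ is a finite product of local rings, each necessarily of the form $\mathbb{F}[x]/\langle g(x)^{r}\rangle$ with $g$ irreducible, which is exactly the situation of Theorem \ref{maintheorem}.
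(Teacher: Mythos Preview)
Your proof is correct and follows essentially the same approach as the paper: identify $\mathcal{R}$ with $\mathbb{F}[x]/\langle f(x)\rangle$, factor $f$ into powers of distinct irreducibles, apply the Chinese Remainder Theorem to obtain $\prod_i \mathbb{F}[x]/\langle g_i(x)^{r_i}\rangle$, invoke Theorem~\ref{maintheorem} on each factor, and then Lemma~\ref{trace on direct sum} to assemble a trace on the product. The paper's proof is precisely this outline; your version simply supplies a few of the routine justifications (why $f\neq 0$, why the trace property transports along $\mathbb{F}$-algebra isomorphisms) that the paper leaves implicit.
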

\begin{proof}
    Since $\mathcal{R}$ is a finite-dimensional commutative $\mathbb{F}$-algebra generated by one element, $\mathcal{R}\cong \mathbb{F}[x]/\langle h(x) \rangle$ for some $h(x)\in \mathbb{F}[x]\setminus\{\bm{0}\}.$ Let $h(x)=\prod_{i=1}^{s}h_i(x)^{r_i}$ be the irreducible factorization of $h(x)$ in $\mathbb{F}[x].$ Then by Chinese Remainder Theorem,
    $$
    \mathcal{R}\cong \prod_{i=1}^{s} \mathcal{R}_i
    $$
    where $\mathcal{R}_i=\mathbb{F}[x]/\langle h_i(x)^{r_i} \rangle.$ By Theorem \ref{maintheorem}, for each $i$, there exists an $\mathbb{F}$-valued trace of $\mathcal{R}_i$ and consequently, by Lemma \ref{trace on direct sum} there is an $\mathbb{F}$-valued trace of $\mathcal{R}$.
\end{proof}
However, if $\mathcal{R}$ is a finite-dimensional commutative $\mathbb{F}$-algebra generated by two elements, then there may not exist any $\mathbb{F}$-valued trace of $\mathcal{R}.$
\begin{example}\label{non-existence of trace for 2 generetors}
    Consider the $\mathbb{F}_2$-algebra  $\mathcal{R}=\mathbb{F}_2[x,y]/\langle x^2, y^2, xy\rangle.$ Let $u$ denote $x+\langle x^2, y^2, xy\rangle$ and $v$ denote $y+\langle x^2, y^2, xy\rangle$. Then any $\mathbb{F}_2$-linear map $\tau: \mathcal{R}\to \mathbb{F}_2$ is of the form:
    $$
    \tau(a+bu+cv)=\alpha a+\beta b+ \gamma c, 
    $$
    for some $\alpha, \beta, \gamma \in \mathbb{F}_2$. If one of $\beta$ or $\gamma$ is non-zero, set $\bm{s}=\gamma u+\beta v\in \mathcal{R}.$ Then
    \begin{equation*}
        \tau\left(\bm{rs}\right)=\tau\left(a\gamma u+a\beta v\right)
        =a\beta\gamma+a\beta\gamma
        =0, \forall\,\bm{r}=a+bu+cv\in \mathcal{R}.
    \end{equation*}
    If both $\gamma$ and $\beta$ are zero, then set $\bm{s}=u$ and note that $\tau(\bm{rs})=0$ for all $\bm{r}\in\mathcal{R}.$ 
    Hence, for any non-zero $\tau$, there is $\bm{s}\in \mathcal{R}\setminus\{\bm{0}\}$ such that $\tau(\bm{rs})=0,\,  \forall\,\bm{r}\in \mathcal{R},$ proving that there is no $\mathbb{F}_2$-valued trace of $\mathcal{R}.$
\end{example}
We note from the following example that writing elements of $\mathcal{R}=\mathbb{F}[x]/\langle g(x)^r\rangle$ in a special form as described in Theorem \ref{maintheorem} is crucial to determine an $\mathbb{F}$-valued trace of $\mathcal{R}.$
\begin{example}
    Let $\mathcal{R}=\mathbb{F}_2[x]/\langle (1+x)^2\rangle$ and let $u=x+\langle (1+x)^2\rangle.$ Then $\sigma:\mathcal{R}\to \mathbb{F}_2$ given by $\sigma(a+bu)=a+b$ for $a,b\in \mathbb{F}_2$ is not an $\mathbb{F}_2$-valued trace of $\mathcal{R}$ as $\ker(\sigma)=\langle 1+u\rangle.$ But  by Theorem \ref{maintheorem}, for $a,b \in \mathbb{F}_2,$ the map $\tau:\mathcal{R}\to \mathbb{F}_2$ given by by $\tau(a+bu)=\tau(a+b+b(1+u))=a+b+b=a$ is an $\mathbb{F}_2$-valued trace of $\mathcal{R}$.
\end{example}
\begin{example}\label{trace on R_2}
    Consider the $\mathbb{F}_2$-algebra $\mathcal{R}=\mathbb{F}_2[x]/\langle x^3-x\rangle\cong \mathbb{F}_2\times \mathbb{F}_2[x]/\langle (1+x)^2\rangle,$ where the isomorphism is given by:
    $$
    a+bx+cx^2+ \langle x^3-x\rangle\mapsto \left(a, a+b+c+b(1+x)+\langle (1+x)^2\rangle\right).
    $$
    If $u$ denotes $x+\langle x^3-x\rangle,$ then
    \begin{align*}
        \tau:\mathcal{R}&\to\mathbb{F}_2\\
        a+bu+cu^2&\mapsto a+(a+b+c+b)=c
    \end{align*}
    is an $\mathbb{F}_2$-valued trace of $\mathcal{R}$ by Theorem \ref{maintheorem} and Theorem \ref{existence of trace}.
\end{example}

\begin{theorem}\label{main proposition}
    Let $\mathcal{R}$ and $\mathcal{S}$ be finite-dimensional commutative $\mathbb{F}$-algebras. If both $\mathcal{R}$ and $\mathcal{S}$ admit $\mathbb{F}$-valued trace, then so does $\mathcal{R}\otimes_\mathbb{F} \mathcal{S}.$
\end{theorem}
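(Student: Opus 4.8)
The plan is to take $\tau:=\tau_{\mathcal{R}}\otimes\tau_{\mathcal{S}}$, the $\mathbb{F}$-linear map on $\mathcal{R}\otimes_{\mathbb{F}}\mathcal{S}$ determined on simple tensors by $\tau(\bm{x}\otimes\bm{y})=\tau_{\mathcal{R}}(\bm{x})\,\tau_{\mathcal{S}}(\bm{y})$, where $\tau_{\mathcal{R}}$ and $\tau_{\mathcal{S}}$ are $\mathbb{F}$-valued traces of $\mathcal{R}$ and $\mathcal{S}$. This is well-defined because $(\bm{x},\bm{y})\mapsto\tau_{\mathcal{R}}(\bm{x})\tau_{\mathcal{S}}(\bm{y})$ is $\mathbb{F}$-bilinear into the $\mathbb{F}$-vector space $\mathbb{F}$, hence factors uniquely through $\mathcal{R}\otimes_{\mathbb{F}}\mathcal{S}$ by the universal property. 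It is non-zero: picking $\bm{x}_0\in\mathcal{R}$ and $\bm{y}_0\in\mathcal{S}$ with $\tau_{\mathcal{R}}(\bm{x}_0)\neq0$ and $\tau_{\mathcal{S}}(\bm{y}_0)\neq0$ gives $\tau(\bm{x}_0\otimes\bm{y}_0)\neq0$. It then remains to verify condition (b) of Proposition \ref{equivalent conditions for trace} for $\tau$, i.e. that $\tau(\bm{z}\bm{w})=0$ for all $\bm{z}\in\mathcal{R}\otimes_{\mathbb{F}}\mathcal{S}$ forces $\bm{w}=\bm{0}$.

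Next I would identify the associated bilinear form. Using the usual $\mathbb{F}$-algebra structure $(\bm{x}\otimes\bm{y})(\bm{x}'\otimes\bm{y}')=\bm{x}\bm{x}'\otimes\bm{y}\bm{y}'$ on $\mathcal{R}\otimes_{\mathbb{F}}\mathcal{S}$ and extending $\mathbb{F}$-bilinearly, one gets for $\bm{z}=\sum_a\bm{x}_a\otimes\bm{y}_a$ and $\bm{w}=\sum_b\bm{x}'_b\otimes\bm{y}'_b$ that $\tau(\bm{z}\bm{w})=\sum_{a,b}\tau_{\mathcal{R}}(\bm{x}_a\bm{x}'_b)\,\tau_{\mathcal{S}}(\bm{y}_a\bm{y}'_b)$. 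In other words, the symmetric bilinear form $f(\bm{z},\bm{w})=\tau(\bm{z}\bm{w})$ on $\mathcal{R}\otimes_{\mathbb{F}}\mathcal{S}$ is precisely the tensor product of the forms $f_{\mathcal{R}}(\bm{x},\bm{x}')=\tau_{\mathcal{R}}(\bm{x}\bm{x}')$ and $f_{\mathcal{S}}(\bm{y},\bm{y}')=\tau_{\mathcal{S}}(\bm{y}\bm{y}')$. By Proposition \ref{equivalent conditions for trace}, $f_{\mathcal{R}}$ and $f_{\mathcal{S}}$ are non-degenerate, so the statement reduces to the purely linear-algebraic fact that a tensor product of non-degenerate bilinear forms on finite-dimensional spaces is non-degenerate.

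To finish, I would make that fact explicit via dual bases. Since $f_{\mathcal{R}}$ is non-degenerate and $\mathcal{R}$ is finite-dimensional, the map $\mathcal{R}\to\mathcal{R}^{*}$, $\bm{u}\mapsto\tau_{\mathcal{R}}(\bm{u}\,\cdot\,)$ is an isomorphism; so for a fixed basis $\{\bm{\alpha}_1,\dots,\bm{\alpha}_m\}$ of $\mathcal{R}$, taking $\bm{\alpha}'_i$ to be the preimage of the $i$-th coordinate functional yields a basis $\{\bm{\alpha}'_1,\dots,\bm{\alpha}'_m\}$ with $\tau_{\mathcal{R}}(\bm{\alpha}'_i\bm{\alpha}_j)=\delta_{ij}$. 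Likewise choose bases $\{\bm{\beta}_1,\dots,\bm{\beta}_n\}$ and $\{\bm{\beta}'_1,\dots,\bm{\beta}'_n\}$ of $\mathcal{S}$ with $\tau_{\mathcal{S}}(\bm{\beta}'_k\bm{\beta}_l)=\delta_{kl}$. Then $\{\bm{\alpha}_i\otimes\bm{\beta}_k\}$ is a basis of $\mathcal{R}\otimes_{\mathbb{F}}\mathcal{S}$, and writing $\bm{w}=\sum_{j,l}c_{jl}\,\bm{\alpha}_j\otimes\bm{\beta}_l$ one computes $\tau\big((\bm{\alpha}'_i\otimes\bm{\beta}'_k)\,\bm{w}\big)=\sum_{j,l}c_{jl}\,\tau_{\mathcal{R}}(\bm{\alpha}'_i\bm{\alpha}_j)\,\tau_{\mathcal{S}}(\bm{\beta}'_k\bm{\beta}_l)=c_{ik}$. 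Hence $\tau(\bm{z}\bm{w})=0$ for all $\bm{z}$ forces every $c_{ik}=0$, so $\bm{w}=\bm{0}$, and $\tau$ is an $\mathbb{F}$-valued trace of $\mathcal{R}\otimes_{\mathbb{F}}\mathcal{S}$.

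The only real content is the last step — extracting the ``dual'' bases from non-degeneracy and noting their tensor products are dual for $f$; the rest is formal. The mild point to be careful about is recording why the $\bm{\alpha}'_i$ (and $\bm{\beta}'_k$) actually form a basis rather than merely an independent set, which is exactly where finite-dimensionality and the isomorphism $\mathcal{R}\to\mathcal{R}^{*}$ enter; commutativity makes $f$, $f_{\mathcal{R}}$, $f_{\mathcal{S}}$ symmetric so there is no left/right ambiguity in condition (b). An alternative to the basis computation is to observe that in finite dimensions $(\mathcal{R}\otimes_{\mathbb{F}}\mathcal{S})^{*}\cong\mathcal{R}^{*}\otimes_{\mathbb{F}}\mathcal{S}^{*}$, and under this identification the map induced by $f$ is the tensor product of the isomorphisms induced by $f_{\mathcal{R}}$ and $f_{\mathcal{S}}$, hence itself an isomorphism.
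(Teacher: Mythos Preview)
Your proof is correct and defines the same trace map as the paper, namely $T=\tau_{\mathcal{R}}\otimes\tau_{\mathcal{S}}$; where you diverge from the paper is in how you verify non-degeneracy. The paper fixes arbitrary bases $\{\bm{\alpha}_i\}$, $\{\bm{\beta}_j\}$, introduces the multiplication matrices $A_r=[\mathfrak{m}_{\bm{\alpha}_r}]$ and $B_s=[\mathfrak{M}_{\bm{\beta}_s}]$, and through a page of coordinate manipulation arrives at $T((\bm{\alpha}_r\otimes\bm{\beta}_s)\bm{z})=(\mathbf{z}^tA_r^t\bm{\tau}_{\mathcal{R}})^tB_s^t\bm{\tau}_{\mathcal{S}}$, then peels off $\mathcal{S}$-non-degeneracy and $\mathcal{R}$-non-degeneracy in two successive steps. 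You instead pass immediately to \emph{dual} bases $\{\bm{\alpha}'_i\}$, $\{\bm{\beta}'_k\}$ (whose existence the paper itself only records later, in Proposition~\ref{existence of dual basis}) and read off the coefficients $c_{ik}$ in one line. Your argument is shorter and more conceptual, explicitly identifying the form as $f_{\mathcal{R}}\otimes f_{\mathcal{S}}$; the paper's computation, while heavier, has the minor advantage of not presupposing the dual-basis machinery and so keeps Section~\ref{Section 3} logically independent of Section~\ref{Section 4}.
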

    \begin{proof}
        Let $\tau_\mathcal{R}$ and $\tau_\mathcal{S}$ be $\mathbb{F}$-valued trace maps of $\mathcal{R}$ and $\mathcal{S}$ respectively. Suppose $\mathcal{B}=\{\bm{\alpha}_1<\bm{\alpha}_2<\dots<\bm{\alpha}_m\}$ and $\mathcal{B}^{'}=\{\bm{\beta}_1<\bm{\beta}_2\dots<\bm{\beta}_n\}$ are ordered bases for $\mathcal{R}$ and $\mathcal{S}$ respectively over $\mathbb{F}$ so that $\{\bm{\alpha}_i\otimes\bm{\beta}_j: 1\leq i\leq m, 1\leq j\leq n\}$ is a basis for $\mathcal{R}\otimes_\mathbb{F}\mathcal{S}$ over $\mathbb{F}.$\\
        Let $\bm{x}=x_1\bm{\alpha}_1+\cdots+x_m\bm{\alpha}_m\in \mathcal{R}$ and let $[\bm{x}]_{\mathcal{B}}$ be the coordinate vector of $\bm{x}$ relative to $\mathcal{B}.$ We often write $[\bm{x}]_{\mathcal{B}}$ as $\mathbf{x}=
        \begin{bmatrix}
            x_1 & x_2 & \cdots & x_m
        \end{bmatrix}^t.$ Define $\mathfrak{m}_{\bm{\alpha}_i}:\mathcal{R}\to \mathcal{R}$ by $\mathfrak{m}_{\bm{\alpha}_i}(\bm{x})=\bm{\alpha}_i\bm{x}.$ Then $\mathfrak{m}_{\bm{\alpha}_i}$ is an $\mathbb{F}$-linear map. Denote the matrix of $\mathfrak{m}_{\bm{\alpha}_i}$ relative to $\mathcal{B}$ by $A_i.$ Then $[\bm{\alpha}_i\bm{x}]_{\mathcal{B}}=A_i\mathbf{x}.$ Thus,
        \begin{equation*}
            \tau_{\mathcal{R}}(\bm{x})=\sum_{i=1}^{m}x_i\tau_{\mathcal{R}}(\bm{\alpha}_i)=\mathbf{x}^t\bm{\tau}_\mathcal{R}, \text{where}\hspace{0.1cm} \bm{\tau}_\mathcal{R}=
            \begin{bmatrix}
                \tau_{\mathcal{R}}(\bm{\alpha}_1) & \tau_{\mathcal{R}}(\bm{\alpha}_2) &
                \cdots &
                \tau_{\mathcal{R}}(\bm{\alpha}_m)
            \end{bmatrix}^t
        \end{equation*}
        and consequently, $\tau_{\mathcal{R}}(\bm{\alpha}_i\bm{x})=\mathbf{x}^tA_i^t\bm{\tau}_\mathcal{R}.$ Since $\tau_{\mathcal{R}}$ is an $\mathbb{F}$-valued trace of $\mathcal{R},$
        \begin{equation}\label{trace on R}
            \tau_{\mathcal{R}}(\bm{\alpha}_i\bm{x})=0, \forall\, 1\leq i\leq m \implies \mathbf{x}=\bm{0}.
        \end{equation}
        Let $\bm{y}=y_1\bm{\beta}_1+\cdots+y_n\bm{\beta}_n\in \mathcal{S}$ and $[\bm{y}]_{\mathcal{B}^{'}}$ be the coordinate vector of $\bm{y}$ relative to $\mathcal{B}^{'}.$ If $B_j$ is the matrix of the $\mathbb{F}$-linear map $\mathfrak{M}_{\bm{\beta}_j}:\mathcal{S}\to \mathcal{S}$ defined by $\mathfrak{M}_{\bm{\beta}_j}(\bm{y})=\bm{\beta}_j\bm{y}$ relative to $\mathcal{B}^{'}$ and writing $[\bm{y}]_{\mathcal{B}^{'}}$ as $\mathbf{y}=
        \begin{bmatrix}
            y_1 & y_2 & \cdots & y_n
        \end{bmatrix}^t,$ then by a similar argument as above, we obtain $\tau_{\mathcal{S}}(\bm{\beta}_j\bm{y})=\mathbf{y}^tB_j^t\bm{\tau}_{\mathcal{S}},$ where $\bm{\tau}_\mathcal{S}=
            \begin{bmatrix}
                \tau_{\mathcal{S}}(\bm{\beta}_1) &
                \tau_{\mathcal{S}}(\bm{\beta}_2) &
                \cdots &
                \tau_{\mathcal{S}}(\bm{\beta}_n)
            \end{bmatrix}^t,$ and,
        \begin{equation}\label{trace on S}
            \tau_{\mathcal{S}}(\bm{\beta}_j\bm{y})=0, \forall\, 1\leq j\leq n \implies \mathbf{y}=\bm{0}.
        \end{equation}
        Define
        \begin{align*}
            T:\mathcal{R}\otimes_\mathbb{F}\mathcal{S}&\to \mathbb{F}\\
            \bm{z}:=\sum_{i=1}^{m}\sum_{j=1}^{n}z_{i,j} \bm{\alpha}_i\otimes\bm{\beta}_j&\mapsto \sum_{i=1}^{m}\sum_{j=1}^{n}z_{i,j} \tau_{\mathcal{R}}(\bm{\alpha}_i)\tau_{\mathcal{S}}(\bm{\beta}_j).
        \end{align*}
        Then,
        \begin{align*}
            T(\bm{z})&=\sum_{j=1}^{n}\left(\sum_{i=1}^{m}z_{i,j} \tau_{\mathcal{R}}(\bm{\alpha}_i)\right)\tau_{\mathcal{S}}(\bm{\beta}_j)\\
            &=\sum_{j=1}^{n}\left(\mathbf{z}_{\bm{\cdot} j}^t\bm{\tau}_\mathcal{R}\right)\tau_{\mathcal{S}}(\bm{\beta}_j) &&\text{where}\hspace{0.1cm} \mathbf{z}_{\bm{\cdot} j}= \begin{bmatrix}
                z_{1,j} &
                z_{2,j} &
                \cdots &
                z_{m,j}
            \end{bmatrix}^t\\
            &=\sum_{j=1}^{n}a_j\tau_{\mathcal{S}}(\bm{\beta}_j) &&\text{where}\hspace{0.1cm}a_j=\mathbf{z}_{\bm{\cdot} j}^t\bm{\tau}_\mathcal{R}\\
            &=\bm{a}^t\bm{\tau}_{\mathcal{S}} &&\text{where}\hspace{0.1cm} \bm{a}=\begin{bmatrix}
                a_1 &
                a_2 &
                \cdots &
                a_n
            \end{bmatrix}^t
        \end{align*}
        Thus, if $\mathbf{z}=[z_{i,j}]_{m\times n},$ then 
            \begin{equation}\label{expression for T}
                T(\bm{z})=\bm{\tau}_\mathcal{R}^t \mathbf{z}\bm{\tau}_{\mathcal{S}}\,.
            \end{equation}
        Let $\mathbf{e}_k^{(r)}= \begin{bmatrix}
            0 & 0 & \cdots & 1 & \cdots & 0
        \end{bmatrix}$ denote the row vector of size $r$ whose $k$-th component is $1$ and all other components are $0$. Then $\mathbf{e}_k^{(m)}A_i$ is the $k$th row of $A_i$ and $B_j{\mathbf{e}_k^{(n)}}^t$ is the $k$th column of $B_j.$ Now, for $1\leq r\leq m,$
        \begin{align*}
            \bm{\alpha}_r\bm{z}&=\sum_{j=1}^{n}\left(\sum_{i=1}^{m}z_{i,j}\bm{\alpha}_i\bm{\alpha}_r\right)\otimes\bm{\beta}_j\\
            &=\sum_{j=1}^{n}\sum_{i=1}^{m}\mathbf{e}_i^{(m)}A_r\mathbf{z}_{\bm{\cdot} j}(\bm{\alpha}_i\otimes\bm{\beta}_j)
        \end{align*}
        Similarly, for $1\leq s\leq n,$ we have
        \begin{align*}
            \bm{\beta}_s\bm{z}=\sum_{j=1}^{n}\sum_{i=1}^{m}\mathbf{e}_j^{(n)}B_s\mathbf{z}_{i \bm{\cdot}}^t(\bm{\alpha}_i\otimes\bm{\beta}_j)
        \end{align*}
        where $\mathbf{z}_{i \bm{\cdot}}=
        \begin{bmatrix}
            z_{i, 1} & z_{i, 2} & \cdots & z_{i, n}
        \end{bmatrix}$ and consequently
        \begin{equation*}
            \bm{\beta}_s\bm{\alpha}_r\bm{z}=\sum_{j=1}^{n}\sum_{i=1}^{m}\mathbf{e}_j^{(n)}B_s\mathbf{w}_{i \bm{\cdot}}^t(\bm{\alpha}_i\otimes\bm{\beta}_j)
        \end{equation*}
        where
        \begin{align*}
            \mathbf{w}_{i \bm{\cdot}}^t&=\begin{bmatrix}
                \mathbf{e}_i^{(m)}A_r\mathbf{z}_{\bm{\cdot} 1} & \mathbf{e}_i^{(m)}A_r\mathbf{z}_{\bm{\cdot} 2} & \cdots & \mathbf{e}_i^{(m)}A_r\mathbf{z}_{\bm{\cdot} n}
            \end{bmatrix}^t\\
            &=\left(\mathbf{e}_i^{(m)}A_r\mathbf{z}\right)^t\\
            &=\mathbf{z}^t A_r^t {\mathbf{e}_i^{(m)}}^t
        \end{align*}
        Hence,
    \begin{equation*}
        \bm{\beta}_s\bm{\alpha}_r\bm{z}=\sum_{j=1}^{n}\sum_{i=1}^{m}u_{i,j}(\bm{\alpha}_i\otimes\bm{\beta}_j)
    \end{equation*}
    where $u_{i,j}=\mathbf{e}_j^{(n)}B_s\mathbf{z}^t A_r^t {\mathbf{e}_i^{(m)}}^t.$ Then by (\ref{expression for T}),
    \begin{equation}\label{Expression for T(...)}
        T(\bm{\beta}_s\bm{\alpha}_r\bm{z})=\bm{\tau}_\mathcal{R}^t \mathbf{u}\bm{\tau}_{\mathcal{S}},\,\text{where}\,\,\bm{u}=[u_{i,j}]_{m\times n}. 
    \end{equation}
    Observe, $\mathbf{u}=A_r\mathbf{z}B_s^t.$
    Hence, by (\ref{Expression for T(...)})
    \begin{equation}
        T(\bm{\beta}_s\bm{\alpha}_r\bm{z})=\left(\mathbf{z}^tA_r^t\bm{\tau}_\mathcal{R}\right)^t B_s^t\bm{\tau}_{\mathcal{S}}.
    \end{equation}
    Suppose $ T(\bm{\beta}_s\bm{\alpha}_r\bm{z})=0, \forall\, 1\leq r\leq m$ and $\forall\, 1\leq s\leq n.$ Then
    \begin{align*}
        &\left(\mathbf{z}^tA_r^t\bm{\tau}_\mathcal{R}\right)^t B_s^t\bm{\tau}_{\mathcal{S}}=0, \forall\, 1\leq s\leq n.\\
        &\implies \mathbf{z}^tA_r^t\bm{\tau}_\mathcal{R}=\mathbf{0}, \forall\, 1\leq r\leq m &&\text{(using (\ref{trace on S}))}\\
        &\implies {\mathbf{z}_{i\bm{\cdot}}}^tA_r^t\bm{\tau}_\mathcal{R}=0, \forall\, 1\leq r\leq m, \forall\,1\leq i\leq m\\
        &\implies \mathbf{z}_{i\bm{\cdot}}=\bm{0}, 1\leq i \leq m &&\text{(using (\ref{trace on R}))}\\
        &\implies \mathbf{z}=\bm{0}
    \end{align*}
    \end{proof}
    \begin{remark}
    The map $T$ is independent of the choice of the bases of $\mathcal{R}$ and $\mathcal{S}.$ For instance, if $\{\bm{\alpha}_1^{'}<\dots<\bm{\alpha}_m^{'}\}$ is another ordered basis for $\mathcal{R},$ then $T^{'}=T,$ where
\begin{equation*}
    T^{'}\left(\sum_{i=1}^{m}\sum_{j=1}^{n}z_{i,j} \bm{\alpha}_i^{'}\otimes\bm{\beta}_j\right)= \sum_{i=1}^{m}\sum_{j=1}^{n}z_{i,j} \tau_{\mathcal{R}}(\bm{\alpha}_i^{'})\tau_{\mathcal{S}}(\bm{\beta}_j).
\end{equation*}
Write $\bm{\alpha}_i^{'}=\sum_{k=1}^{m}p_k^{(i)}\bm{\alpha}_k.$ Then for $1\le i\le m, 1\le j\le n,$
\begin{align*}
    T^{'}( \bm{\alpha}_i^{'}\otimes\bm{\beta}_j)&=\tau_{\mathcal{R}}(\bm{\alpha}_i^{'})\tau_{\mathcal{S}}(\bm{\beta}_j)\\
    &= \tau_{\mathcal{R}}\left(\sum_{k=1}^{m}p_k^{(i)}\bm{\alpha}_k\right)\tau_{\mathcal{S}}(\bm{\beta}_j)\\
    &=\sum_{k=1}^{m}p_k^{(i)}\tau_{\mathcal{R}}(\bm{\alpha}_k)\tau_{\mathcal{S}}(\bm{\beta}_j)\\
    &=T\left(\sum_{k=1}^{m}p_k^{(i)} \bm{\alpha}_k\otimes\bm{\beta}_j\right)\\
    &=T(\bm{\alpha}_i^{'}\otimes\bm{\beta}_j)
\end{align*}
so that $T^{'}=T.$
\end{remark}
\par
    From the proof of Theorem \ref{main proposition}, we record the explicit form of an $\mathbb{F}$-valued trace of $\mathcal{R}\otimes_{\mathbb{F}}\mathcal{S}$ in the following corollary.
\begin{corollary}\label{main corollary}
    Let $\mathcal{R}$ and $\mathcal{S}$ be finite-dimensional commutative $\mathbb{F}$-algebras. If $\{\bm{\alpha}_1,\dots,\bm{\alpha}_m\}$ and $\{\bm{\beta}_1,\dots,\bm{\beta}_n\}$ are bases for $\mathcal{R}$ and $\mathcal{S}$ respectively over $\mathbb{F},$ then $T\left(\sum_{i=1}^{m}\sum_{j=1}^{n}z_{i,j} \bm{\alpha}_i\otimes\bm{\beta}_j\right)= \sum_{i=1}^{m}\sum_{j=1}^{n}z_{i,j} \tau_{\mathcal{R}}(\bm{\alpha}_i)\tau_{\mathcal{S}}(\bm{\beta}_j)$ defines an $\mathbb{F}$-valued trace of $\mathcal{R}\otimes_{\mathbb{F}}\mathcal{S},$ where $\tau_{\mathcal{R}}$ and $\tau_{\mathcal{S}}$ are $\mathbb{F}$-valued traces of $\mathcal{R}$ and $\mathcal{S}$ respectively.
\end{corollary}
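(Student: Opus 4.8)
The plan is to recognize that the map $T$ in the statement is precisely the map constructed during the proof of Theorem~\ref{main proposition}, so the corollary only asks us to record the conclusions already obtained there. First I would observe that $T$ is well defined and $\mathbb{F}$-linear at no cost: it is prescribed on the basis $\{\bm{\alpha}_i\otimes\bm{\beta}_j: 1\le i\le m,\ 1\le j\le n\}$ of $\mathcal{R}\otimes_{\mathbb{F}}\mathcal{S}$ and extended by linearity. Next, $T$ is non-zero: since $\tau_{\mathcal{R}}$ and $\tau_{\mathcal{S}}$ are non-zero, there exist indices $i_0,j_0$ with $\tau_{\mathcal{R}}(\bm{\alpha}_{i_0})\neq 0$ and $\tau_{\mathcal{S}}(\bm{\beta}_{j_0})\neq 0$, and then $T(\bm{\alpha}_{i_0}\otimes\bm{\beta}_{j_0})=\tau_{\mathcal{R}}(\bm{\alpha}_{i_0})\tau_{\mathcal{S}}(\bm{\beta}_{j_0})\neq 0$.

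To verify the trace property I would use the equivalence in Proposition~\ref{equivalent conditions for trace}(b), i.e.\ show that $T(\bm{w}\bm{z})=0$ for all $\bm{w}\in\mathcal{R}\otimes_{\mathbb{F}}\mathcal{S}$ implies $\bm{z}=\bm{0}$. Since the elements $\bm{\beta}_s\bm{\alpha}_r$ (for $1\le r\le m$, $1\le s\le n$) span $\mathcal{R}\otimes_{\mathbb{F}}\mathcal{S}$ over $\mathbb{F}$, it suffices to test $T$ against these. Writing $\mathbf{z}=[z_{i,j}]_{m\times n}$ for the coordinate matrix of $\bm{z}$, the matrix identities from the proof of Theorem~\ref{main proposition} give $T(\bm{z})=\bm{\tau}_{\mathcal{R}}^t\mathbf{z}\bm{\tau}_{\mathcal{S}}$ (see~(\ref{expression for T})) and, with $A_r$ and $B_s$ the matrices of multiplication by $\bm{\alpha}_r$ and $\bm{\beta}_s$, $T(\bm{\beta}_s\bm{\alpha}_r\bm{z})=\bigl(\mathbf{z}^tA_r^t\bm{\tau}_{\mathcal{R}}\bigr)^tB_s^t\bm{\tau}_{\mathcal{S}}$. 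Vanishing of this quantity for all $s$, together with the non-degeneracy of $\tau_{\mathcal{S}}$ (see~(\ref{trace on S})), gives $\mathbf{z}^tA_r^t\bm{\tau}_{\mathcal{R}}=\mathbf{0}$ for all $r$; reading this equality row by row and applying the non-degeneracy of $\tau_{\mathcal{R}}$ (see~(\ref{trace on R})) forces every row of $\mathbf{z}$ to vanish, so $\mathbf{z}=\bm{0}$ and hence $\bm{z}=\bm{0}$.

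The one place that requires care is the bookkeeping of the ring structure of $\mathcal{R}\otimes_{\mathbb{F}}\mathcal{S}$ in coordinates: that left multiplication by $\bm{\alpha}_r$ sends the coordinate matrix $\mathbf{z}$ to $A_r\mathbf{z}$, left multiplication by $\bm{\beta}_s$ sends $\mathbf{z}$ to $\mathbf{z}B_s^t$, and therefore multiplication by $\bm{\beta}_s\bm{\alpha}_r$ produces the matrix $A_r\mathbf{z}B_s^t$; and that it is legitimate to restrict attention to the spanning family $\{\bm{\beta}_s\bm{\alpha}_r\}$ rather than to all of $\mathcal{R}\otimes_{\mathbb{F}}\mathcal{S}$. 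These are exactly the identities assembled in the proof of Theorem~\ref{main proposition}, so in practice this corollary is proved by a direct appeal to that argument; beyond transcribing the matrix computations carefully, I anticipate no real obstacle.
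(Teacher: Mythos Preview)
Your proposal is correct and matches the paper's approach exactly: the paper simply records this corollary as an immediate consequence of the proof of Theorem~\ref{main proposition}, since the map $T$ defined there is precisely the map in the statement and the non-degeneracy was already established. Your added remarks on well-definedness, non-vanishing, and the legitimacy of testing against the spanning family $\{\bm{\beta}_s\bm{\alpha}_r\}$ are sound and only make explicit what the paper leaves implicit.
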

By induction we have:
\begin{corollary}\label{existence of trace on tensor product}
    Let $\mathcal{R}_1,\mathcal{R}_2,\dots,\mathcal{R}_n$ be finite-dimensional commutative $\mathbb{F}$-algebras. If there is an $\mathbb{F}$-valued trace of each $\mathcal{R}_i,$ then there is an $\mathbb{F}$-valued trace of $\mathcal{R}_1\otimes_{\mathbb{F}}\mathcal{R}_2\otimes_{\mathbb{F}}\cdots\otimes_{\mathbb{F}}\mathcal{R}_n.$
\end{corollary}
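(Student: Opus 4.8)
The plan is to argue by induction on $n$, with Theorem \ref{main proposition} as the engine. The case $n=1$ is trivial, and $n=2$ is exactly Theorem \ref{main proposition}. So assume $n\ge 3$ and that the statement holds for any collection of $n-1$ finite-dimensional commutative $\mathbb{F}$-algebras each admitting an $\mathbb{F}$-valued trace. Put $\mathcal{S}=\mathcal{R}_1\otimes_{\mathbb{F}}\cdots\otimes_{\mathbb{F}}\mathcal{R}_{n-1}$. The first thing I would record is that $\mathcal{S}$ is again an object to which the induction hypothesis (and then Theorem \ref{main proposition}) applies: the tensor product over $\mathbb{F}$ of commutative $\mathbb{F}$-algebras is commutative, its $\mathbb{F}$-dimension equals $\prod_{i=1}^{n-1}\dim_{\mathbb{F}}\mathcal{R}_i<\infty$, and its structure map sends $1_{\mathbb{F}}$ to $1_{\mathcal{R}_1}\otimes\cdots\otimes 1_{\mathcal{R}_{n-1}}$. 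Hence by the induction hypothesis $\mathcal{S}$ carries an $\mathbb{F}$-valued trace $\tau_{\mathcal{S}}$.

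Next I would use the canonical associativity isomorphism of $\mathbb{F}$-algebras
$$\mathcal{R}_1\otimes_{\mathbb{F}}\cdots\otimes_{\mathbb{F}}\mathcal{R}_n \;\cong\; \mathcal{S}\otimes_{\mathbb{F}}\mathcal{R}_n,$$
noting that an isomorphism of $\mathbb{F}$-algebras carries ideals to ideals and $\mathbb{F}$-linear forms to $\mathbb{F}$-linear forms, hence transports an $\mathbb{F}$-valued trace of one side to an $\mathbb{F}$-valued trace of the other. Since $\mathcal{R}_n$ admits an $\mathbb{F}$-valued trace $\tau_{\mathcal{R}_n}$ by hypothesis, Theorem \ref{main proposition} applied to the pair $(\mathcal{S},\mathcal{R}_n)$ yields an $\mathbb{F}$-valued trace of $\mathcal{S}\otimes_{\mathbb{F}}\mathcal{R}_n$, and pulling it back along the associativity isomorphism completes the inductive step. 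Unwinding the recursion and applying Corollary \ref{main corollary} at each stage, one obtains the explicit description $T\big(\sum z_{i_1,\dots,i_n}\,\bm{\alpha}^{(1)}_{i_1}\otimes\cdots\otimes\bm{\alpha}^{(n)}_{i_n}\big)=\sum z_{i_1,\dots,i_n}\prod_{k=1}^{n}\tau_{\mathcal{R}_k}(\bm{\alpha}^{(k)}_{i_k})$, where $\{\bm{\alpha}^{(k)}_{i}\}$ is a basis of $\mathcal{R}_k$ over $\mathbb{F}$ and $\tau_{\mathcal{R}_k}$ an $\mathbb{F}$-valued trace of $\mathcal{R}_k$.

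There is essentially no genuine difficulty here; the only point requiring care is not to apply Theorem \ref{main proposition} to the $n$-fold product directly, since that theorem concerns two factors. The correct move is to bracket the product as $\mathcal{S}\otimes_{\mathbb{F}}\mathcal{R}_n$ and to check, as above, that $\mathcal{S}$ stays within the class of finite-dimensional commutative $\mathbb{F}$-algebras so that both the induction hypothesis and Theorem \ref{main proposition} legitimately apply. With that bookkeeping in place the argument is immediate.
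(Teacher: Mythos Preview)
Your argument is correct and is exactly the approach the paper intends: the paper simply prefaces the corollary with ``By induction we have,'' leaving the routine bracketing $\mathcal{S}\otimes_{\mathbb{F}}\mathcal{R}_n$ and appeal to Theorem~\ref{main proposition} implicit. Your write-up supplies precisely those details, and the explicit formula you unwind matches the natural $n$-fold extension of Corollary~\ref{main corollary}.
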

\begin{corollary}\label{most general case where trace exists}
    If $\mathcal{R}=\mathbb{F}[x_1, x_2,\dots, x_n]/\langle g_1(x_1), g_2(x_2),\dots, g_n(x_n)\rangle,$ where $g_i(x_i)\in \mathbb{F}[x_i]\setminus\{\bm{0}\},$ then $\mathcal{R}$ admits an $\mathbb{F}$-valued trace.
\end{corollary}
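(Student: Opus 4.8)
The plan is to assemble the statement from three ingredients already established: the tensor-product decomposition of Theorem~\ref{tensorproduct}, the existence of a trace on a singly generated algebra (Theorem~\ref{existence of trace}), and the stability of the trace property under tensor products (Corollary~\ref{existence of trace on tensor product}). First I would dispose of a degenerate case: if some $g_i(x_i)$ is a non-zero constant then $\mathbb{F}[x_i]/\langle g_i(x_i)\rangle$, and hence $\mathcal{R}$, is the zero algebra; so I would assume from the outset that each $g_i$ is a non-unit (equivalently $\deg g_i(x_i)\ge 1$), which is the only situation in which $\mathcal{R}\neq 0$ and an $\mathbb{F}$-valued trace can exist.

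Next, apply Theorem~\ref{tensorproduct} to obtain an $\mathbb{F}$-algebra isomorphism
\[
\mathcal{R}\;\cong\;\frac{\mathbb{F}[x_1]}{\langle g_1(x_1)\rangle}\otimes_{\mathbb{F}}\frac{\mathbb{F}[x_2]}{\langle g_2(x_2)\rangle}\otimes_{\mathbb{F}}\cdots\otimes_{\mathbb{F}}\frac{\mathbb{F}[x_n]}{\langle g_n(x_n)\rangle}.
\]
Each tensor factor $\mathcal{R}_i:=\mathbb{F}[x_i]/\langle g_i(x_i)\rangle$ is a finite-dimensional commutative $\mathbb{F}$-algebra generated by the single element $x_i+\langle g_i(x_i)\rangle$, so Theorem~\ref{existence of trace} furnishes an $\mathbb{F}$-valued trace $\tau_i$ of $\mathcal{R}_i$ (concretely, via the Chinese Remainder decomposition of $\mathcal{R}_i$ into primary pieces followed by the explicit map of Theorem~\ref{maintheorem} on each piece and the patching of Lemma~\ref{trace on direct sum}). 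By Corollary~\ref{existence of trace on tensor product}, the tensor product $\mathcal{R}_1\otimes_{\mathbb{F}}\cdots\otimes_{\mathbb{F}}\mathcal{R}_n$ then admits an $\mathbb{F}$-valued trace; indeed Corollary~\ref{main corollary} gives it explicitly as $\sum z_{i_1,\dots,i_n}\,\tau_1(\cdot)\cdots\tau_n(\cdot)$ with respect to a product basis.

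Finally I would record the one small lemma that makes the argument close up: the trace property transfers along any $\mathbb{F}$-algebra isomorphism. If $\varphi:\mathcal{A}\to\mathcal{B}$ is such an isomorphism and $\tau$ is an $\mathbb{F}$-valued trace of $\mathcal{B}$, then $\tau\circ\varphi$ is a non-zero $\mathbb{F}$-linear functional on $\mathcal{A}$ with $\ker(\tau\circ\varphi)=\varphi^{-1}(\ker\tau)$; since $\varphi$ puts the ideals of $\mathcal{A}$ and $\mathcal{B}$ in inclusion-preserving bijection, $\ker(\tau\circ\varphi)$ contains no non-zero ideal of $\mathcal{A}$, so $\tau\circ\varphi$ is an $\mathbb{F}$-valued trace of $\mathcal{A}$. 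Applying this with the isomorphism of Theorem~\ref{tensorproduct} produces an $\mathbb{F}$-valued trace of $\mathcal{R}$, completing the proof. There is no genuinely hard step here: essentially everything is a direct citation of earlier results, and the only things needing a line of their own are this transfer lemma and the routine bookkeeping of composing the Chinese Remainder isomorphism used inside Theorem~\ref{existence of trace} with the tensor-product identification.
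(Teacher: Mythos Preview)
Your proposal is correct and follows essentially the same route as the paper: invoke Theorem~\ref{tensorproduct} to decompose $\mathcal{R}$ as a tensor product of the singly generated algebras $\mathbb{F}[x_i]/\langle g_i(x_i)\rangle$, apply Theorem~\ref{existence of trace} to each factor, and conclude via Corollary~\ref{existence of trace on tensor product}. The paper's proof is exactly these three citations; your additions (disposing of the degenerate constant-$g_i$ case and spelling out that the trace property transfers along an $\mathbb{F}$-algebra isomorphism) are points the paper leaves implicit, so if anything you are being more careful than the original.
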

\begin{proof}
    By Theorem \ref{tensorproduct}, $\mathcal{R}\cong\mathbb{F}[x_1]/\langle g_1(x_1)\rangle\otimes_{\mathbb{F}}\mathbb{F}[x_2]/\langle g_2(x_2)\rangle\otimes_{\mathbb{F}}\cdots\otimes_{\mathbb{F}}\mathbb{F}[x_n]/\langle g_n(x_n)\rangle.$ By Theorem \ref{existence of trace}, there is an $\mathbb{F}$-valued trace of each $\mathbb{F}[x_i]/\langle g_i(x_i)\rangle,$ hence, by Corollary \ref{existence of trace on tensor product}, there is an $\mathbb{F}$-valued trace of $\mathcal{R}.$  
\end{proof}

\begin{example}
    Let $\mathcal{R}=\mathbb{F}[x,y]/\langle x^2, y^2\rangle$ and set $w=x+\langle x^2 \rangle.$ Then $\tau: \mathbb{F}[x]/\langle x^2\rangle \to \mathbb{F}$ defined by $\tau(a+bw)=a+b$ is an $\mathbb{F}$-valued trace on $\mathbb{F}[x]/\langle x^2\rangle$ by Theorem \ref{maintheorem} and hence $T:\mathcal{R}\to \mathbb{F}$ defined by $T(a+bu+cv+duv)=a+b+c+d$ is an $\mathbb{F}$-valued trace on $\mathcal{R}$ by Corollary \ref{main corollary}, where $u=x+\langle x^2, y^2\rangle$ and $v=y+\langle x^2, y^2\rangle.$
\end{example}

\section{Applications}\label{Section 4}
\subsection{Bases and dual basis}
    Let $\mathcal{R}$ be a finite-dimensional commutative $\mathbb{F}$-algebra that admits an $\mathbb{F}$-valued trace $\tau.$ For $\bm{\beta}\in \mathcal{R},$ define
    \begin{align*}
        f_{\bm{\beta}}: \mathcal{R}&\to \mathbb{F}\\
        f_{\bm{\beta}}(\bm{\alpha})& =\tau(\bm{\beta\alpha}),\,\forall\, \bm{\alpha}\in \mathcal{R}.
    \end{align*}
    By $\mathbb{F}$-linearity of $\tau,$ $f_{\bm{\beta}}$ is a linear functional on $\mathcal{R}.$ Moreover, we have
\begin{theorem}\label{trace map and linear map}
    If $\mathcal{R}$ is a finite-dimensional commutative $\mathbb{F}$-algebra that admits an $\mathbb{F}$-valued trace $\tau,$ then for every functional $f:\mathcal{R}\to \mathbb{F},$ there exists $\bm{\beta}\in \mathcal{R}$ such that $f=f_{\bm{\beta}}.$
\end{theorem}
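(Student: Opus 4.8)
The plan is to use the non-degenerate bilinear form $f(\bm{x},\bm{y})=\tau(\bm{xy})$ guaranteed by Proposition \ref{equivalent conditions for trace} together with a dimension count. First I would observe that the assignment $\bm{\beta}\mapsto f_{\bm{\beta}}$ is itself an $\mathbb{F}$-linear map from $\mathcal{R}$ into the dual space $\mathcal{R}^{*}=\mathrm{Hom}_{\mathbb{F}}(\mathcal{R},\mathbb{F})$; linearity is immediate from the bilinearity of multiplication and the $\mathbb{F}$-linearity of $\tau$. So I get a linear map $\Phi:\mathcal{R}\to\mathcal{R}^{*}$, $\Phi(\bm{\beta})=f_{\bm{\beta}}$, and the claim to prove is exactly that $\Phi$ is surjective.

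Next I would show $\Phi$ is injective. If $\Phi(\bm{\beta})=0$, then $f_{\bm{\beta}}(\bm{\alpha})=\tau(\bm{\beta\alpha})=0$ for all $\bm{\alpha}\in\mathcal{R}$; by the non-degeneracy statement (b) of Proposition \ref{equivalent conditions for trace} (with the roles of $\bm{x}$ and $\bm{y}$ swapped, which is fine since $\mathcal{R}$ is commutative), this forces $\bm{\beta}=\bm{0}$. Hence $\ker\Phi=\{\bm{0}\}$ and $\Phi$ is injective. Then the standard fact $\dim_{\mathbb{F}}\mathcal{R}^{*}=\dim_{\mathbb{F}}\mathcal{R}<\infty$ (finite-dimensionality of $\mathcal{R}$ is in the hypothesis) lets me conclude by rank–nullity that $\Phi$ is also surjective: an injective linear endomorphism-type map between finite-dimensional spaces of equal dimension is an isomorphism. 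Therefore every $f\in\mathcal{R}^{*}$ equals $f_{\bm{\beta}}$ for some (in fact unique) $\bm{\beta}\in\mathcal{R}$, which is the assertion.

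There is essentially no hard obstacle here; the only thing to be careful about is making sure the non-degeneracy is invoked in the correct variable, but commutativity of $\mathcal{R}$ makes the bilinear form symmetric, so $\tau(\bm{\beta\alpha})=0$ for all $\bm{\alpha}$ and $\tau(\bm{\alpha\beta})=0$ for all $\bm{\alpha}$ are the same condition. If one wants to avoid even mentioning $\mathcal{R}^{*}$, the same argument can be phrased purely in terms of bases: pick a basis, note that $\bm{\beta}\mapsto(f_{\bm{\beta}}(\bm{\alpha}_1),\dots,f_{\bm{\beta}}(\bm{\alpha}_m))$ is an injective linear map $\mathbb{F}^m\to\mathbb{F}^m$ hence a bijection, so any prescribed tuple of values — and thus any functional — is realized. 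I would present the dual-space version as it is cleanest, and I expect the entire proof to occupy only a few lines.
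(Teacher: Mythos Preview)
Your proposal is correct and follows essentially the same approach as the paper: both arguments show that $\bm{\beta}\mapsto f_{\bm{\beta}}$ is injective using non-degeneracy of the trace form, then invoke the equality $\dim_{\mathbb{F}}\mathcal{R}^{*}=\dim_{\mathbb{F}}\mathcal{R}$ to conclude surjectivity. If anything, your version is more explicit than the paper's, which proves injectivity (phrased as $\bm{\beta}\neq\bm{\gamma}\Rightarrow f_{\bm{\beta}}\neq f_{\bm{\gamma}}$) and leaves the dimension-count step unstated.
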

\begin{proof}
    Let $\bm{\beta}, \bm{\gamma}\in \mathcal{R},$ with $\bm{\beta}\neq\bm{\gamma}.$ Since $\bm{\beta}-\bm{\gamma}\neq\bm{0},$ it follows from the definition of $\tau$ that there is some $\bm{\alpha}\in \mathcal{R}$ such that $\tau((\bm{\beta}-\bm{\gamma})\bm{\alpha})\neq 0.$ Hence, $ f_{\bm{\beta}}(\bm{\alpha})-f_{\bm{\gamma}}(\bm{\alpha})=\tau((\bm{\beta}-\bm{\gamma})\bm{\alpha})\neq 0,$ and so the maps $f_{\bm{\beta}}$ and $f_{\bm{\gamma}}$ are different. 
\end{proof}

Let $\mathcal{R}$ be a commutative $\mathbb{F}$-algebra of dimension $n$ that admits an $\mathbb{F}$-valued trace $\tau.$ If $\{\bm{\alpha}_1,\dots, \bm{\alpha}_n\}$ is a basis for $\mathcal{R}$ over $\mathbb{F},$ then the projection map $\pi_j:\mathcal{R}\to \mathbb{F}, \sum_{i=1}^{n}x_i\bm{\alpha}_j\mapsto x_j$ is an $\mathbb{F}$-linear map from $\mathcal{R}$ to $\mathbb{F},$ hence by Theorem \ref{trace map and linear map}, there is $\bm{\beta}_j\in \mathcal{R}$ such that $\pi_j(\bm{\alpha})=\tau(\bm{\beta}_j\bm{\alpha})$ for all $\bm{\alpha}\in \mathcal{R}.$ Putting $\bm{\alpha}=\bm{\alpha}_i, 1\leq i\leq n,$ we have \[ \tau(\bm{\beta}_j\bm{\alpha}_i)=\begin{cases} 
      0 & \text{if}\,\, i\neq j \\
      1 & \text{if}\,\, i=j 
   \end{cases}
\]Thus, $\{\beta_1,\dots,\beta_n\}$ is also a basis for $\mathcal{R}$ over $\mathbb{F}.$
\begin{definition}
    Let $\mathcal{R}$ be a finite-dimensional commutative $\mathbb{F}$-algebra that admits an $\mathbb{F}$-valued trace $\tau.$ Then the two bases $\{\bm{\alpha}_1,\dots,\bm{\alpha}_n\}$ and $\{\bm{\beta}_1,\dots,\bm{\beta}_n\}$ for $\mathcal{R}$ over $\mathbb{F}$ are said to be \textit{$\tau$-dual (or $\tau$- complementary) bases} if $\tau(\bm{\alpha}_i \bm{\beta}_j)=\delta_{ij},$ where $\delta_{ij}$ denotes the Kronecker symbol.
\end{definition}

From the above discussion, we have the following proposition.
\begin{proposition}\label{existence of dual basis}
    Let $\mathcal{R}$ be a finite-dimensional commutative $\mathbb{F}$-algebra. If there is an $\mathbb{F}$-valued trace $\tau$ of $\mathcal{R},$ then for any basis for $\mathcal{R}$ over $\mathbb{F},$ there exists a unique $\tau$-dual basis.
\end{proposition}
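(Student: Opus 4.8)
The plan is to extract existence directly from the paragraph preceding the statement and then supply the short uniqueness argument. Fix a basis $\{\bm{\alpha}_1,\dots,\bm{\alpha}_n\}$ of $\mathcal{R}$ over $\mathbb{F}$. For each $j$, the coordinate functional $\pi_j\colon \mathcal{R}\to\mathbb{F}$, $\sum_i x_i\bm{\alpha}_i\mapsto x_j$, is $\mathbb{F}$-linear, so Theorem \ref{trace map and linear map} produces $\bm{\beta}_j\in\mathcal{R}$ with $\pi_j(\bm{\alpha})=\tau(\bm{\beta}_j\bm{\alpha})$ for all $\bm{\alpha}\in\mathcal{R}$; evaluating at $\bm{\alpha}=\bm{\alpha}_i$ gives $\tau(\bm{\beta}_j\bm{\alpha}_i)=\delta_{ij}$, which is exactly the $\tau$-duality condition.

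Next I would verify that $\{\bm{\beta}_1,\dots,\bm{\beta}_n\}$ is genuinely a basis rather than just an $n$-element set: from a relation $\sum_j c_j\bm{\beta}_j=\bm{0}$, apply $\bm{y}\mapsto\tau(\bm{y}\bm{\alpha}_i)$ and use $\tau(\bm{\beta}_j\bm{\alpha}_i)=\delta_{ij}$ to conclude $c_i=0$ for each $i$; since $\dim_{\mathbb{F}}\mathcal{R}=n$, linear independence of $n$ vectors gives a basis. This settles existence.

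For uniqueness, suppose $\{\bm{\beta}_j\}$ and $\{\bm{\beta}_j'\}$ are both $\tau$-dual to $\{\bm{\alpha}_i\}$. Then $\tau\big((\bm{\beta}_j-\bm{\beta}_j')\bm{\alpha}_i\big)=0$ for every $i$; because the $\bm{\alpha}_i$ span $\mathcal{R}$ and $\bm{y}\mapsto\tau(\bm{y}\bm{\alpha})$ is $\mathbb{F}$-linear, this upgrades to $\tau\big((\bm{\beta}_j-\bm{\beta}_j')\bm{\alpha}\big)=0$ for all $\bm{\alpha}\in\mathcal{R}$. The non-degeneracy of the bilinear form $f(\bm{x},\bm{y})=\tau(\bm{xy})$, i.e. statement (b) of Proposition \ref{equivalent conditions for trace} (which holds since $\tau$ is an $\mathbb{F}$-valued trace, by Definition \ref{definition of trace}), forces $\bm{\beta}_j-\bm{\beta}_j'=\bm{0}$, so the two bases coincide.

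There is no serious obstacle here; the substantive content lives entirely in Theorem \ref{trace map and linear map} and in the non-degeneracy built into Definition \ref{definition of trace} / Proposition \ref{equivalent conditions for trace}. The only points that warrant a moment's care are bookkeeping: one must pass from the finite list of equalities $\tau((\bm{\beta}_j-\bm{\beta}_j')\bm{\alpha}_i)=0$ to the statement for all $\bm{\alpha}$ by linearity before invoking non-degeneracy, and, on the existence side, one should not skip the independence computation that confirms $\{\bm{\beta}_j\}$ is a basis.
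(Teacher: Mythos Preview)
Your proposal is correct and follows essentially the same route as the paper: the paper derives existence from the preceding paragraph (coordinate functionals $\pi_j$ represented via Theorem \ref{trace map and linear map} as $\tau(\bm{\beta}_j\,\cdot\,)$, then evaluate at $\bm{\alpha}_i$), and you reproduce exactly this. You are in fact more careful than the paper, which asserts that $\{\bm{\beta}_j\}$ is a basis without writing out the independence check and does not spell out the uniqueness argument at all; your additions (independence via $\tau(\,\cdot\,\bm{\alpha}_i)$ and uniqueness via non-degeneracy from Proposition \ref{equivalent conditions for trace}) are the natural completions.
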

\begin{example}
    Let $\mathcal{R}=\mathbb{F}[x_1, x_2,\dots, x_n]/\langle g_1(x_1), g_2(x_2),\dots, g_n(x_n)\rangle,$ where $g_i(x_i)\in \mathbb{F}[x_i].$ Then by Corollary \ref{most general case where trace exists}, there is an $\mathbb{F}$-valued trace $\tau$ of $\mathcal{R}.$ Hence, for any basis for $\mathcal{R},$ there is a $\tau$-dual basis.
\end{example}
\begin{definition}
    Let $\mathcal{R}$ be a commutative $\mathbb{F}$-algebra of dimension $n$ that admits an $\mathbb{F}$-valued trace $\tau.$ Then the discriminant $\Delta_{\tau}(\bm{\alpha}_1,\dots, \bm{\alpha}_n)$ of the elements $\bm{\alpha}_1,\dots, \bm{\alpha}_n\in \mathcal{R}$ is defined by the determinant of order $n$ given by
    \begin{equation*}
        \Delta_{\tau}(\bm{\alpha}_1,\dots, \bm{\alpha}_n)=
        \begin{vmatrix}
            \tau(\bm{\alpha}_1\bm{\alpha}_1) & \tau(\bm{\alpha}_1\bm{\alpha}_2) & \cdots & \tau(\bm{\alpha}_1\bm{\alpha}_n)\\
            \tau(\bm{\alpha}_2\bm{\alpha}_1) & \tau(\bm{\alpha}_2\bm{\alpha}_2) & \cdots & \tau(\bm{\alpha}_2\bm{\alpha}_n)\\
            \vdots & \vdots & \vdots & \vdots\\
            \tau(\bm{\alpha}_n\bm{\alpha}_1) & \tau(\bm{\alpha}_n\bm{\alpha}_2) & \cdots & \tau(\bm{\alpha}_n\bm{\alpha}_n)\\
        \end{vmatrix}
    \end{equation*}
\end{definition}
Note that $\Delta_{\tau}(\bm{\alpha}_1,\dots, \bm{\alpha}_n)\in\mathbb{F}.$ 
\par
A simple characterization of bases for $\mathcal{R}$ which admits an $\mathbb{F}$-valued trace can be given as follows:
\begin{theorem}
    Let $\mathcal{R}$ be a commutative $\mathbb{F}$-algebra of dimension $n$ that admits an $\mathbb{F}$-valued trace $\tau.$ Then $\{\bm{\alpha}_1,\dots, \bm{\alpha}_n\}$ is a basis for $\mathcal{R}$ over $\mathbb{F}$ iff $\Delta_{\tau}(\bm{\alpha}_1,\dots, \bm{\alpha}_n)\neq0.$
\end{theorem}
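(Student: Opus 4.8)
The plan is to recognize $\Delta_{\tau}(\bm{\alpha}_1,\dots,\bm{\alpha}_n)$ as the determinant of the Gram matrix $G=[\tau(\bm{\alpha}_i\bm{\alpha}_j)]_{n\times n}$ of the symmetric bilinear form $f(\bm{x},\bm{y})=\tau(\bm{xy})$, and to exploit that $f$ is non-degenerate by Proposition \ref{equivalent conditions for trace}(b). Both implications are then short linear-algebra arguments; the only point requiring a little care is threading the non-degeneracy of $f$ through correctly.

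For the forward direction, I would assume $\{\bm{\alpha}_1,\dots,\bm{\alpha}_n\}$ is a basis and argue by contradiction: if $\Delta_{\tau}(\bm{\alpha}_1,\dots,\bm{\alpha}_n)=0$, then $G$ is singular over $\mathbb{F}$, so there is a nonzero column vector $\bm{c}=(c_1,\dots,c_n)^t\in\mathbb{F}^n$ with $G\bm{c}=\bm{0}$. Set $\bm{y}=\sum_{j=1}^{n}c_j\bm{\alpha}_j$; since the $\bm{\alpha}_j$ are a basis and $\bm{c}\neq\bm{0}$, we have $\bm{y}\neq\bm{0}$. For each $i$, $\tau(\bm{\alpha}_i\bm{y})=\sum_{j}c_j\tau(\bm{\alpha}_i\bm{\alpha}_j)=(G\bm{c})_i=0$, and since every $\bm{x}\in\mathcal{R}$ is an $\mathbb{F}$-combination of the $\bm{\alpha}_i$, linearity of $\tau$ gives $\tau(\bm{xy})=0$ for all $\bm{x}\in\mathcal{R}$. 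By Proposition \ref{equivalent conditions for trace}(b) this forces $\bm{y}=\bm{0}$, a contradiction.

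For the converse, I would prove the contrapositive: if $\{\bm{\alpha}_1,\dots,\bm{\alpha}_n\}$ is not a basis, then (having exactly $n=\dim_{\mathbb{F}}\mathcal{R}$ elements) it is linearly dependent, so there is $\bm{c}=(c_1,\dots,c_n)^t\neq\bm{0}$ with $\sum_{i}c_i\bm{\alpha}_i=\bm{0}$. Multiplying by $\bm{\alpha}_j$ and applying $\tau$ yields $\sum_{i}c_i\tau(\bm{\alpha}_i\bm{\alpha}_j)=0$ for every $j$, i.e. $\bm{c}^tG=\bm{0}$ with $\bm{c}\neq\bm{0}$; hence $G$ is singular and $\Delta_{\tau}(\bm{\alpha}_1,\dots,\bm{\alpha}_n)=0$. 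Taking contrapositives, $\Delta_{\tau}(\bm{\alpha}_1,\dots,\bm{\alpha}_n)\neq0$ implies $\{\bm{\alpha}_1,\dots,\bm{\alpha}_n\}$ is a basis.

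The argument has no real obstacle beyond bookkeeping; the subtlety worth stating explicitly is that in the forward direction one genuinely needs non-degeneracy of $f$ (not just that $\tau\neq0$), which is exactly what Proposition \ref{equivalent conditions for trace} supplies from the definition of an $\mathbb{F}$-valued trace, while in the converse direction only elementary manipulation is used, with the hypothesis that $\mathcal{R}$ has dimension $n$ ensuring that "$n$ linearly independent vectors" is the same as "a basis."
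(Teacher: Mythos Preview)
Your proposal is correct and follows essentially the same approach as the paper's proof: both directions hinge on interpreting $\Delta_{\tau}$ as the determinant of the Gram matrix of the bilinear form $(\bm{x},\bm{y})\mapsto\tau(\bm{xy})$ and invoking its non-degeneracy in the forward direction. The only cosmetic differences are that the paper argues the forward direction directly (showing the rows of the Gram matrix are linearly independent) rather than by contradiction, and handles the converse directly rather than via the contrapositive; the underlying computations are identical.
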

\begin{proof}
    Let $\{\bm{\alpha}_1,\dots, \bm{\alpha}_n\}$ be a basis for $\mathcal{R}$ over $\mathbb{F}.$ We show that the rows of the determinant $\Delta_{\tau}(\bm{\alpha}_1,\dots, \bm{\alpha}_n)$ are linearly independent. Suppose that $c_1\tau(\bm{\alpha}_1\bm{\alpha}_j)+\cdots+c_n\tau(\bm{\alpha}_n\bm{\alpha}_j)=0$ for $1\leq j\leq n,$ where $c_i\in \mathbb{F}.$ If $\bm{\beta}=c_1\bm{\alpha}_1+\cdots+c_n\bm{\alpha}_n,$ then $\tau(\bm{\beta\alpha}_j)=0,$ for $1\leq j\leq n.$ However, this implies $\bm{\beta}=\bm{0}$ as $\{\bm{\alpha}_1,\dots, \bm{\alpha}_n\}$ forms a basis for $\mathcal{R}.$ Hence, $c_1=\cdots=c_n=0,$ which ensures that $\Delta_{\tau}(\bm{\alpha}_1,\dots, \bm{\alpha}_n)\neq0.$
    \par
    Conversely, suppose that $\Delta_{\tau}(\bm{\alpha}_1,\dots, \bm{\alpha}_n)\neq0.$ Let $c_1\bm{\alpha}_1+\cdots+c_n\bm{\alpha}_n=\bm{0},$ for some $c_i\in \mathbb{F}.$ Multiplying by $\bm{\alpha}_j$ and applying $\tau,$ we obtain
    \begin{equation*}
        c_1\tau(\bm{\alpha}_1\bm{\alpha}_j)+\cdots+c_n\tau(\bm{\alpha}_n\bm{\alpha}_j)=0\,\, \text{for}\, 1\le j\le n.
    \end{equation*}
    Since $\Delta_{\tau}(\bm{\alpha}_1,\dots, \bm{\alpha}_n)\neq0,$ it follows that $c_1=\cdots=c_n=0.$
\end{proof}

\subsection{Applications to Algebraic coding theory}
Let $R$ be a finite commutative ring. Then $R^n$ is a free $R$-module of rank $n.$ Any non-empty subset $\mathcal{C}$ of $R^n$ is called a \textit{code} of length $n$ over $R$. If, in addition, $\mathcal{C}$ is a $R$-submodule of $R^n,$ then the code $\mathcal{C}$ is called a \textit{linear code}. One may refer to \cite{pless} and \cite{sole2009codes} for more on codes over rings and fields.
\par
Let $\mathcal{R}_q$ be a finite-dimensional commutative $\mathbb{F}_q$-algebra that admits an $\mathbb{F}_q$-valued trace $\tau.$ Then the trace map $\tau$ can be used to go down from a code defined over $\mathcal{R}_q$ to a code over $\mathbb{F}_q$ as follows:

\begin{definition}[$\tau$-trace code]
    For a linear code $\mathcal{C}$ of length $n$ over $\mathcal{R}_q,$ define the $\tau$-trace code of $\mathcal{C}$ by
    \begin{equation*}
        \tau\left(\mathcal{C}\right):=\left\{\left(\tau(c_1),\dots,\tau(c_n)\right): (c_1,\dots,c_n)\in \mathcal{C} \right\},
    \end{equation*}
    which is a linear code of length $n$ over $\mathbb{F}_q.$
\end{definition}
Suppose $\varphi:\mathbb{F}_q\hookrightarrow\mathcal{R}_q$ defines the $\mathbb{F}_q$-algebra structure of $\mathcal{R}_q,$ then we identify $\mathbb{F}_q$ with $\varphi(\mathbb{F}_q).$
\par
Another way to go down from a code defined over $\mathcal{R}_q$ to a code over $\mathbb{F}_q$ is the following:
\begin{definition}[Subfield subcode]
    Let $\mathcal{C}$ be a linear code of length $n$ over $\mathcal{R}_q.$ The code $\mathcal{C}|_{\mathbb{F}_q}:=\mathcal{C}\cap\mathbb{F}_q^n,$ is called the subfield subcode.
\end{definition}
If $\Tr$ is the usual trace map from $\mathbb{F}_{q^m}$ to $\mathbb{F}_q,$ then there is a nice relationship between the trace and subfield subcodes. Interestingly, the same relationship holds between $\tau$-trace code and subfield subcodes.
\begin{theorem}
    Let $\tau$ be an $\mathbb{F}_q$-valued trace of $\mathcal{R}_q.$ Suppose $\mathcal{C}$ is a linear code of length $n$ over $\mathcal{R}_q,$ then
    \begin{equation*}
        \tau\left(\mathcal{C}^\bot\right)=\left(\mathcal{C}|_{\mathbb{F}_q}\right)^\bot
    \end{equation*}
    where $\mathcal{C}^\bot=\{\bm{x}=(x_1,\dots,x_n)\in \mathcal{R}_q^n:\bm{x}\cdot\bm{y}=\sum_{i=1}^{n}x_iy_i=0,\, \forall\, \bm{y}=(y_1,\dots,y_n)\in \mathcal{C}\}.$
\end{theorem}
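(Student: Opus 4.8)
The plan is to follow the pattern of Delsarte's theorem, with the finite-field trace replaced by the $\mathbb{F}_q$-valued trace $\tau$. Write $N=\dim_{\mathbb{F}_q}\mathcal{R}_q$ and let $\Phi\colon\mathcal{R}_q^n\to\mathbb{F}_q^n$ denote the coordinatewise application of $\tau$, so that $\Phi$ is $\mathbb{F}_q$-linear and $\tau(\mathcal{C}^\bot)=\Phi(\mathcal{C}^\bot)$. On the $\mathbb{F}_q$-vector space $\mathcal{R}_q^n$ I would put the symmetric $\mathbb{F}_q$-bilinear form $\langle\bm{x},\bm{y}\rangle:=\tau\!\left(\sum_{i=1}^{n}x_iy_i\right)$; feeding in vectors supported on a single coordinate, together with Proposition~\ref{equivalent conditions for trace}(b), shows this form is non-degenerate, so for every $\mathbb{F}_q$-subspace $U\subseteq\mathcal{R}_q^n$ one has $(U^{\perp_{\tau}})^{\perp_{\tau}}=U$, where $U^{\perp_{\tau}}$ is the orthogonal complement relative to $\langle\cdot,\cdot\rangle$. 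I will also freely use the ordinary double-dual identity for subspaces of $\mathbb{F}_q^n$ under the dot product, keeping in mind that the dual of the code $\mathcal{C}|_{\mathbb{F}_q}$ in the statement is formed inside $\mathbb{F}_q^n$.

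The first, and I expect hardest, step is the identity $\mathcal{C}^\bot=\mathcal{C}^{\perp_{\tau}}$, where on the left $\bot$ is the $\mathcal{R}_q$-dual from the statement. The inclusion $\mathcal{C}^\bot\subseteq\mathcal{C}^{\perp_{\tau}}$ is immediate after applying $\tau$. For the converse, if $\bm{x}\in\mathcal{C}^{\perp_{\tau}}$, then for each $\bm{c}\in\mathcal{C}$ and each $r\in\mathcal{R}_q$ we have $r\bm{c}\in\mathcal{C}$, hence $0=\langle\bm{x},r\bm{c}\rangle=\tau\!\left(r\sum_i x_ic_i\right)$ for all $r\in\mathcal{R}_q$; since $\tau$ is a trace, Proposition~\ref{equivalent conditions for trace}(b) forces $\sum_i x_ic_i=0$, i.e. $\bm{x}\in\mathcal{C}^\bot$. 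This is the only place where the $\mathcal{R}_q$-module structure of $\mathcal{C}$ — not merely its $\mathbb{F}_q$-linearity — is used, and it is exactly where the defining property of a trace map enters.

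Next I would record the adjointness between $\Phi$ and the two forms: for $\bm{x}\in\mathcal{R}_q^n$ and $\bm{a}\in\mathbb{F}_q^n$ (regarded inside $\mathcal{R}_q^n$ via the structure map $\varphi$), $\Phi(\bm{x})\cdot\bm{a}=\sum_i\tau(x_i)a_i=\sum_i\tau(x_ia_i)=\langle\bm{x},\bm{a}\rangle$ by $\mathbb{F}_q$-linearity of $\tau$. Consequently, for $\bm{a}\in\mathbb{F}_q^n$, one has $\bm{a}\in\bigl(\Phi(\mathcal{C}^\bot)\bigr)^{\bot}$ (dot-product dual taken in $\mathbb{F}_q^n$) if and only if $\langle\bm{x},\bm{a}\rangle=0$ for all $\bm{x}\in\mathcal{C}^\bot$, i.e. if and only if $\bm{a}\in(\mathcal{C}^\bot)^{\perp_{\tau}}\cap\mathbb{F}_q^n$. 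Combining this with the first step, $(\mathcal{C}^\bot)^{\perp_{\tau}}=(\mathcal{C}^{\perp_{\tau}})^{\perp_{\tau}}=\mathcal{C}$, so $\bigl(\Phi(\mathcal{C}^\bot)\bigr)^{\bot}=\mathcal{C}\cap\mathbb{F}_q^n=\mathcal{C}|_{\mathbb{F}_q}$.

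To conclude, I would take the dot-product dual inside $\mathbb{F}_q^n$ once more: since $\Phi(\mathcal{C}^\bot)$ is an $\mathbb{F}_q$-subspace of $\mathbb{F}_q^n$, $\bigl(\Phi(\mathcal{C}^\bot)^{\bot}\bigr)^{\bot}=\Phi(\mathcal{C}^\bot)$, which together with the previous line yields $\tau(\mathcal{C}^\bot)=\Phi(\mathcal{C}^\bot)=\bigl(\mathcal{C}|_{\mathbb{F}_q}\bigr)^{\bot}$, as desired. Beyond the identity $\mathcal{C}^\bot=\mathcal{C}^{\perp_{\tau}}$, the argument is routine bookkeeping with non-degenerate bilinear forms over the field $\mathbb{F}_q$.
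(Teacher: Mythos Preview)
Your argument is correct. Both proofs are Delsarte-style and hinge on the same ingredient, namely that $\tau(r\,\xi)=0$ for all $r\in\mathcal{R}_q$ forces $\xi=0$; the organization, however, differs. The paper proves $\tau(\mathcal{C}^\bot)\subseteq(\mathcal{C}|_{\mathbb{F}_q})^\bot$ directly and then, arguing by contradiction, picks $\bm{u}\in(\tau(\mathcal{C}^\bot))^\bot\setminus\mathcal{C}|_{\mathbb{F}_q}$, chooses $\bm{v}\in\mathcal{C}^\bot$ with $\bm{u}\cdot\bm{v}\neq 0$, and then an $\alpha\in\mathcal{R}_q$ with $\tau(\alpha(\bm{u}\cdot\bm{v}))\neq 0$ to reach a contradiction. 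You instead set up the non-degenerate $\mathbb{F}_q$-form $\langle\cdot,\cdot\rangle$ on $\mathcal{R}_q^n$, isolate the structural identity $\mathcal{C}^\bot=\mathcal{C}^{\perp_\tau}$ (using that $\mathcal{C}$ is an $\mathcal{R}_q$-submodule), and finish with the adjointness $\Phi(\bm{x})\cdot\bm{a}=\langle\bm{x},\bm{a}\rangle$ and double duals over $\mathbb{F}_q$. Your route is a bit longer but more transparent: the lemma $\mathcal{C}^\bot=\mathcal{C}^{\perp_\tau}$ together with $(\,\cdot\,)^{\perp_\tau\perp_\tau}=\mathrm{id}$ is exactly what justifies the paper's implicit step that $\bm{u}\notin\mathcal{C}$ guarantees some $\bm{v}\in\mathcal{C}^\bot$ with $\bm{u}\cdot\bm{v}\neq 0$, a fact which is not automatic for duals over a ring.
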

\begin{proof}
Let $\bm{c}=\left(c_1, \dots, c_n\right)\in \mathcal{C}|_{\mathbb{F}_q}$ and $\bm{a}=\left(a_1, \dots, a_n\right)\in \mathcal{C}^{\bot}$. Then
\begin{equation*}
\bm{c} \cdot \tau(\bm{a})=\sum_{i=1}^n c_i \tau(a_i)=\tau\left(\sum_{i=1}^n c_i a_i\right)=\tau(\bm{c} \cdot \bm{a})=0,
\end{equation*}
showing that $\tau\left(\mathcal{C}^\bot\right)\subseteq\left(\mathcal{C}|_{\mathbb{F}_q}\right)^\bot.$
Next, we show that $\tau\left(\mathcal{C}^\bot\right)\supseteq\left(\mathcal{C}|_{\mathbb{F}_q}\right)^\bot.$ This assertion is equivalent to
$$
\left(\tau\left(\mathcal{C}^{\perp}\right)\right)^{\perp} \subseteq \mathcal{C}|_{\mathbb{F}_q} .
$$
Suppose the above relationship does not hold, then there exist some $\bm{u} \in\left(\tau\left(\mathcal{C}^{\bot}\right)\right)^{\bot} \setminus \mathcal{C}|_{\mathbb{F}_q}$
    and $\bm{v} \in \mathcal{C}^{\bot}$ with $\bm{u} \cdot \bm{v} \neq 0$. As $\tau$ is an $\mathbb{F}_q$-valued trace of $\mathcal{R}_q$, there is an element $\alpha \in \mathcal{R}_q$ such that $\tau(\alpha(\bm{u} \cdot \bm{v})) \neq 0$. Hence,
$$
\bm{u} \cdot \tau(\alpha \bm{v})=\tau(\bm{u} \cdot \alpha \bm{v})=\tau(\alpha(\bm{u} \cdot \bm{v})) \neq 0 .
$$

But, on the other hand, we have $\bm{u} \cdot \tau(\alpha \bm{v})=0$ because $\bm{u} \in$ $\left(\tau\left(\mathcal{C}^{\bot}\right)\right)^{\bot}$ and $\alpha \bm{v} \in C^{\bot}$. The desired result follows from this contradiction.
\end{proof}
An $\mathbb{F}_q$-valued trace $\tau$ of $\mathcal{R}_q$ can also be used as a tool to construct linear codes over $\mathbb{F}_q$ as follows:\\
Let $D=\{\{\bm{d}_1< \bm{d}_2<...<\bm{d}_n\}\}$ be an ordered multiset, where each $\bm{d}_i\in \mathcal{R}_q.$ Define
$$
\mathcal{C}_D:=\{\left(\tau(\bm{xd}_1), \tau(\bm{xd}_2),\dots, \tau(\bm{xd}_n)\right): \bm{x}\in \mathcal{R}_q\}
$$
Then $\mathcal{C}_D$ is a linear code of length $n$ over $\mathbb{F}_q$ and we call $D$ the \emph{defining sequence} of the code $\mathcal{C}_D.$
\begin{example}
    Consider $\mathcal{R}_2=\mathbb{F}_2[x]/\langle x^3-x\rangle$ and let $u=x+\langle x^3-x\rangle.$ Then by Example \ref{trace on R_2}, $\tau(a+bu+cu^2)=c$ is an $\mathbb{F}_2$-valued trace of $\mathcal{R}_2.$ Let $D=\{\{1<u<1+u<1+u^2<u+u^2<1+u+u^2\}\}.$ Then $\mathcal{C}_D$ is a binary $[6,3,3]$-\textit{quasicyclic} linear code of degree $2.$
\end{example}
\subsubsection{Construction of subfield codes}
An $\mathbb{F}_q$-valued trace $\tau$ of $\mathcal{R}_q$ is useful in the computation of subfield codes.\\
Suppose that $\mathcal{C}$ is a code of length $n$ over $\mathcal{R}_q$ generated by the (full rank) matrix $G$ and $\mathcal{B}$ be a basis for $\mathcal{R}_q$ over $\mathbb{F}_q.$ The code $\mathcal{C}^{(q)}$ over $\mathbb{F}_q$ generated by the matrix which is obtained by replacing each entry of $G$ by its column representation relative to $\mathcal{B},$ is called the \textit{subfield code} of $\mathcal{C}$\cite{ding2019subfield}. In fact, it is independent of the choice of $\mathcal{B}.$

Let $\{\bm{\alpha}_1,\dots,\bm{\alpha}_m\}$ be a basis for $\mathcal{R}_q$ and let $\{\bm{\beta}_1,\dots,\bm{\beta}_m\}$ be its $\tau$-dual basis for $\mathcal{R}_q$ over $\mathbb{F}_q$. Then if $\bm{r}=\sum_{i=1}^{m}r_i\bm{\alpha}_i \in \mathcal{R}_q,$ for $r_i\in \mathbb{F}_q,$ then
$$
r_i=\tau(\bm{r}\bm{\beta}_i).
$$
With the above discussion, Theorem 2.4 of \cite{ding2019subfield} gets generalized to $\mathcal{R}_q.$
\begin{theorem}
    Let $\tau:\mathcal{R}_q\to \mathbb{F}_q$ be an $\mathbb{F}_q$-valued trace of $\mathcal{R}_q$ and let $\mathcal{B}=\{\bm{\alpha}_1,\dots,\bm{\alpha}_m\}$ be a basis for $\mathcal{R}_q$ over $\mathbb{F}_q.$ Suppose that $\mathcal{C}$ is a linear code of length $n$ over $\mathcal{R}_q$ generated by:
    $$
    G=\begin{bmatrix}
        \bm{g}_{11} & \bm{g}_{12} & \cdots & \bm{g}_{1n} \\
        \bm{g}_{21} & \bm{g}_{22} & \cdots & \bm{g}_{2n} \\
        \vdots  & \vdots & \vdots & \vdots\\
        \bm{g}_{k1} & \bm{g}_{k2} & \cdots & \bm{g}_{kn} \\
    \end{bmatrix}$$
    Then the subfield code $\mathcal{C}^{(q)}$ of $\mathcal{C}$ is generated by
    $$
        G^{(q)}=\begin{bmatrix}
            G_1^{(q)}\\
            G_2^{(q)}\\
            \vdots\\
            G_k^{(q)}
        \end{bmatrix}
    $$
    where for $1\leq i\leq k,$
    $$
        G_i^{(q)}=\begin{bmatrix}
            \tau(\bm{g}_{i1}\bm{\alpha}_1) & \tau(\bm{g}_{i2}\bm{\alpha}_1) & \dots & \tau(\bm{g}_{in}\bm{\alpha}_1)\\
            \tau(\bm{g}_{i1}\bm{\alpha}_2) & \tau(\bm{g}_{i2}\bm{\alpha}_2) & \dots & \tau(\bm{g}_{in}\bm{\alpha}_2)\\
            \vdots & \vdots & \vdots & \vdots\\
            \tau(\bm{g}_{i1}\bm{\alpha}_m) & \tau(\bm{g}_{i2}\bm{\alpha}_m) & \dots & \tau(\bm{g}_{in}\bm{\alpha}_m)\\
        \end{bmatrix}.
    $$
\end{theorem}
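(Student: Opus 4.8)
The plan is to reduce the statement to two facts already in hand: the existence of a $\tau$-dual basis (Proposition \ref{existence of dual basis}) and the fact, recalled above from \cite{ding2019subfield}, that the subfield code $\mathcal{C}^{(q)}$ does not depend on the basis of $\mathcal{R}_q$ over $\mathbb{F}_q$ used to form the column-replacement matrix. The only real content is to recognize that the matrix $G^{(q)}$ displayed in the statement is precisely the column-replacement matrix of $G$ taken with respect to the $\tau$-dual basis of $\mathcal{B}$, rather than with respect to $\mathcal{B}$ itself; once this is seen, the theorem follows immediately from basis-independence.

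First I would let $\{\bm{\beta}_1,\dots,\bm{\beta}_m\}$ be the (unique) $\tau$-dual basis of $\mathcal{B}=\{\bm{\alpha}_1,\dots,\bm{\alpha}_m\}$ provided by Proposition \ref{existence of dual basis}, so that $\tau(\bm{\alpha}_l\bm{\beta}_{l'})=\delta_{ll'}$. For each entry $\bm{g}_{ij}$ of $G$, expand it in this dual basis as $\bm{g}_{ij}=\sum_{l=1}^{m}c_{ij}^{(l)}\bm{\beta}_l$ with $c_{ij}^{(l)}\in\mathbb{F}_q$. Multiplying by $\bm{\alpha}_l$ and applying $\tau$, and using commutativity of $\mathcal{R}_q$ together with the dual relation, one gets $\tau(\bm{g}_{ij}\bm{\alpha}_l)=\sum_{l'}c_{ij}^{(l')}\tau(\bm{\beta}_{l'}\bm{\alpha}_l)=c_{ij}^{(l)}$. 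Thus the column $\bigl(\tau(\bm{g}_{ij}\bm{\alpha}_1),\dots,\tau(\bm{g}_{ij}\bm{\alpha}_m)\bigr)^t$ is exactly the coordinate vector of $\bm{g}_{ij}$ relative to the basis $\{\bm{\beta}_1,\dots,\bm{\beta}_m\}$.

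Next I would observe from the block structure in the statement that $G^{(q)}$ is obtained from $G$ by replacing each entry $\bm{g}_{ij}$ by its coordinate column vector relative to $\{\bm{\beta}_1,\dots,\bm{\beta}_m\}$ — the $i$-th row of $G$ contributes the block $G_i^{(q)}$, whose $(l,j)$ entry is $c_{ij}^{(l)}$. By definition $\mathcal{C}^{(q)}$ is generated by the column-replacement matrix of $G$ relative to $\mathcal{B}$, and since this code is independent of the choice of basis of $\mathcal{R}_q$ over $\mathbb{F}_q$ \cite{ding2019subfield}, the matrix $G^{(q)}$ — being the column-replacement matrix relative to the basis $\{\bm{\beta}_1,\dots,\bm{\beta}_m\}$ — also generates $\mathcal{C}^{(q)}$. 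Note that the full-rank hypothesis on $G$ is not used, since only the row span of the replacement matrix is relevant.

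There is no serious obstacle here; the existing subfield-code machinery does the heavy lifting. The one place that needs care is the bookkeeping distinguishing $\bm{\alpha}_l$ from $\bm{\beta}_l$: pairing $\bm{g}_{ij}$ against $\bm{\alpha}_l$ under $\tau$ returns the $\bm{\beta}_l$-coordinate of $\bm{g}_{ij}$, not its $\bm{\alpha}_l$-coordinate, and this is exactly why $\tau$-duality is the right tool. If one preferred a self-contained proof avoiding the citation, one could instead verify basis-independence directly by noting that changing the basis multiplies each row-block of the replacement matrix on the left by a fixed invertible matrix over $\mathbb{F}_q$ (the inverse of the change-of-basis matrix), which preserves row spans; but invoking \cite{ding2019subfield} keeps the argument short.
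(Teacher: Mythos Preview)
Your proposal is correct and matches the paper's approach. The paper does not supply a formal proof environment for this theorem; instead, the paragraph immediately preceding it records that for $\bm{r}=\sum_i r_i\bm{\alpha}_i$ one has $r_i=\tau(\bm{r}\bm{\beta}_i)$ via the $\tau$-dual basis, and then simply asserts that ``with the above discussion, Theorem~2.4 of \cite{ding2019subfield} gets generalized to $\mathcal{R}_q$'' --- exactly the dual-basis coordinate extraction plus basis-independence argument you spell out, and your remark that the displayed $G^{(q)}$ is the column-replacement matrix relative to $\{\bm{\beta}_l\}$ rather than $\mathcal{B}$ itself makes the bookkeeping more explicit than the paper does.
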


\section{Conclusion and Discussion}\label{Section 5}

In this manuscript, we studied the $\mathbb{F}$-valued trace of a finite-dimensional commutative $\mathbb{F}$-algebra. Given a finite-dimensional commutative $\mathbb{F}$-algebra, an $\mathbb{F}$-valued trace may not exist; however, we proved that a finite-dimensional commutative $\mathbb{F}$-algebra of the form $\mathcal{R}=\mathbb{F}[x_1, x_2,\dots, x_n]/\langle g_1(x_1), g_2(x_2),\dots, g_n(x_n)\rangle,$ $g_i(x_i)\in \mathbb{F}[x_i],$ possesses an $\mathbb{F}$-valued trace, and we constructed such a map on $\mathcal{R}$. We showed that an $\mathbb{F}$-valued trace on a finite-dimensional commutative $\mathbb{F}$-algebra induces a non-degenerate bilinear form on $\mathcal{R}$ and hence determines all linear transformations from $\mathcal{R}$ to $\mathbb{F},$ and it is helpful in the  characterisation of bases for $\mathcal{R}.$ In the field of algebraic coding theory, we presented how an $\mathbb{F}_q$-valued trace can be used as a tool to descend from codes over an $\mathbb{F}_q$-algebra to codes over $\mathbb{F}_q.$ 
\par
A non-zero $\mathbb{F}$-valued $\mathbb{F}$-linear map on a non-commutative $\mathbb{F}$-algebra is called an $\mathbb{F}$-valued trace if its kernel contains no non-zero left ideals. Let $\mathcal{R}=M_n(\mathbb{F}),$ the ring of all $n\times n$ matrices over $\mathbb{F}.$ Then $\mathcal{R}$ is a non-commutative $\mathbb{F}$-algebra for $n\ge 2$. Define $\tau:\mathcal{R}\to\mathbb{F}$ by $A=[a_{ij}]\mapsto\sum_{i=1}^{n}a_{ii},$ the usual trace of $A.$ Let $\mathcal{I}$ be any non-zero left ideal of $\mathcal{R}$ and suppose that $B=(b_{ij})\in \mathcal{I}$ be such $b_{ij}\neq 0$ for some $1\le i, j\le n$. If $E_{ij}\in \mathcal{R}$ be such that whose $(i,j)$th entry is $1$ and whose other entries are all $0,$ then $\Tr(E_{jj}\sum_{k=1}^{n}E_{ki}B)=b_{ij}\neq 0,$ and consequently, $\mathcal{I}\nsubseteq \ker(\tau).$ Hence, $\ker(\tau)$ contains no non-zero left ideals of $\mathcal{R},$ proving that $\tau$ is an $\mathbb{F}$-valued trace of $\mathcal{R}.$
\par
The non-commutative $\mathbb{F}_2$-algebra\, $\mathcal{U}=\left\{A=\begin{bmatrix}
    a & b\\
    0 & c
\end{bmatrix}: a,b,c\in \mathbb{F}_2\right\}$ does not admit any $\mathbb{F}_2$-valued trace map.
\par
Consider the non-commutative non-unital ring $E=\langle a,b\,|\, 2a=2b=0, a^2=a, b^2=b, ab=b, ba=a\rangle.$ Note that $E$ has no $\mathbb{F}_2$-algebra structure and the underlying set of $E$ is $\{0, a, b, c=a+b\}.$ Consider the following action of $\mathbb{F}_2$ on $E$: $0e=e0=0$ and $1e=e1=e$ for all $e\in E$. Then every element of $E$ can be expressed as $as+ct$ for $s,t\in \mathbb{F}_2.$ The only non-zero projections of $E$ onto $\mathbb{F}_2$ are $\tau_i:E\to \mathbb{F}_2, i=1,2,3,$ where
    \begin{align*}
        \tau_1(as+ct)&=a\\
        \tau_2(as+ct)&=c\\
        \tau_3(as+ct)&=a+c.
    \end{align*}
    It is not difficult to verify that $\ker(\tau_i)$ contains non-zero left ideals of $E,$ for $i=1, 2, 3.$
\section*{Declarations}
\subsection*{Conflict of Interest}
Both authors declare that they have no conflict of interest.

\section*{Acknowledgements}
The work of the first author was supported by Council of Scientific and Industrial Research (CSIR) India, under the grant no. 09/0086(13310)/2022-EMR-I.

\bibliographystyle{abbrv}
\bibliography{TraceAlgebra}

\begin{thebibliography}{10}

\bibitem{ding2019subfield}
C.~Ding and Z.~Heng.
\newblock The subfield codes of ovoid codes.
\newblock {\em IEEE Transactions on Information Theory}, 65(8):4715--4729,
  2019.

\bibitem{gao2015some}
J.~Gao.
\newblock Some results on linear codes over f \_p+ u f \_p+ u\^{} 2 f \_p f p+
  u f p+ u 2 f p.
\newblock {\em Journal of Applied Mathematics and Computing}, 47:473--485,
  2015.

\bibitem{generalized}
M.~Greferath and A.~Nechaev.
\newblock Generalized frobenius extensions of finite rings and trace functions.
\newblock In {\em 2010 IEEE Information Theory Workshop}, pages 1--5. IEEE,
  2010.

\bibitem{hoffman1971linear}
K.~Hoffman and R.~Kunze.
\newblock Linear algebra. prentice-hall, inc.
\newblock {\em New Jersey}, 1971.

\bibitem{pless}
W.~C. Huffman and V.~Pless.
\newblock {\em Fundamentals of error-correcting codes}.
\newblock Cambridge University Press, Cambridge, 2003.

\bibitem{liu2018two}
Y.~Liu, M.~Shi, and P.~Sol{\'e}.
\newblock Two-weight and three-weight codes from trace codes over fp+ ufp+ vfp+
  uvfp.
\newblock {\em Discrete Mathematics}, 341(2):350--357, 2018.

\bibitem{lu2020linear}
Y.~Lu, M.~Shi, M.~Greferath, and P.~Sol{\'e}.
\newblock Linear codes over finite rings are trace codes.
\newblock {\em Discrete Mathematics}, 343(8):111919, 2020.

\bibitem{sagar2022certain}
V.~Sagar and R.~Sarma.
\newblock Certain binary minimal codes constructed using simplicial complexes,
  2022.

\bibitem{shi2017constacyclic}
M.~Shi, L.~Qian, L.~Sok, N.~Aydin, and P.~Sol{\'e}.
\newblock On constacyclic codes over z4 [u]/< u2- 1> and their gray images.
\newblock {\em Finite Fields and Their Applications}, 45:86--95, 2017.

\bibitem{sole2009codes}
P.~Sole.
\newblock {\em Codes over rings}, volume~6.
\newblock World Scientific, 2009.

\bibitem{ZHU20112677}
S.~Zhu and L.~Wang.
\newblock A class of constacyclic codes over fp+vfp and its gray image.
\newblock {\em Discrete Mathematics}, 311(23):2677--2682, 2011.

\end{thebibliography}

\end{document}